\newtheorem{definition}{Definition}
\newtheorem{theorem}{Theorem}[section]
\newtheorem{proposition}{Proposition}[section]
\newtheorem{corollary}{Corollary}[theorem]
\newtheorem{lemma}[theorem]{Lemma}
\newcommand{\s}{{s}}
\begin{document}
\title{Emergent Non-Markovianity in Logical Qubit Dynamics}

\author{Jalan A. Ziyad}
\email[]{jaziyad@sandia.gov}
\affiliation{Quantum Performance Laboratory, Sandia National Laboratories, Albuquerque, NM, 87185, USA}
\affiliation{Center for Quantum Information and Control, University of New Mexico, Albuquerque, NM, 87131, USA}
\affiliation{Department of Physics and Astronomy, University of New Mexico, Albuquerque, NM, 87131, USA}

\author{Robin Blume-Kohout}
\affiliation{Quantum Performance Laboratory, Sandia National Laboratories, Albuquerque, NM, 87185, USA}
\affiliation{Center for Quantum Information and Control, University of New Mexico, Albuquerque, NM, 87131, USA}

\author{Kenneth Rudinger}
\affiliation{Quantum Performance Laboratory, Sandia National Laboratories, Albuquerque, NM, 87185, USA}

\date{December 8, 2025}

\begin{abstract}
    \noindent Logical qubits encoded in quantum error correcting codes can exhibit non-Markovian dynamical evolution, even when the underlying physical noise is Markovian.  To understand this emergent non-Markovianity, we define a Markovianity condition appropriate to logical gate operations, and study it by relating logical operations to their physical implementation (operations on the data qubits into which the logical qubit is encoded).  We apply our analysis to small quantum codes, and show that they exhibit non-Markovian dynamics even for very simple physical noise models.  We show that non-Markovianity can emerge from Markovian physical operations if (and only if) the physical qubits are not necessarily returned to the code subspace after every round of QEC.  In this situation, the syndrome qubits can act as a memory, mediating time correlations and enabling violation of the Markov condition.  We quantify the emergent non-Markovianity in simple examples, and propose sufficient conditions for reliable use of gate-based characterization techniques like gate set tomography in early fault-tolerant quantum devices.
\end{abstract} 

\maketitle

\section{Introduction}

Fault-tolerant quantum computing (FTQC) relies upon quantum error correction (QEC) \cite{knill_theory_1997,knill_theory_2000, shor1995scheme, shor1996fault, knill_resilient_1998,knill1996threshold, aleferis_quantum_2006, kitaev2003fault, aharonov1997fault, dennis_topological_2002, fowler2012surface}, which encodes quantum information across many physical qubits to protect it from environmental and control noise.  The failure probability or \emph{error rate} of a \emph{logical qubit} protected by QEC may be far lower than that of  any unprotected physical qubit, but it is not zero.  Logical qubits still experience errors.  The error processes of logical qubits can be studied, modeled, and characterized in experiments.  Understanding and modeling logical error processes is on the critical path to fault-tolerant quantum computing, because FTQC systems must be designed and then experimentally validated to achieve specific (low) failure probabilities.  One of the most fundamental questions about any qubit, physical or logical, is ``Are its error processes \emph{Markovian}?'' -- i.e., can a modeler confidently expect that faults at distinct times will occur independently?  The answer determines whether or not a wide range of techniques for \emph{quantum characterization verification and validation} (QCVV) \cite{blume2025quantum,hashim2408practical,proctor2025benchmarking} can be used reliably to characterize the qubit's behavior.

In principle, a logical qubit could be studied and modeled by characterizing its component physical qubits using QCVV, then simulating their detailed behavior in QEC and other FTQC primitives.  But unless the QEC code is very small or the physical qubits' noise has a simple structure \cite{iyer_small_2018, iverson_coherence_2020, rahn_exact_2002, beale_efficiently_2021, bravyi2018correcting}, this is often infeasible.  The simulation may require extraordinary computational resources, and characterizing the physical-qubit operations in sufficient detail (including, e.g., crosstalk effects) may require extraordinary experimental resources.  A simpler and appealing alternative is to probe the performance of logical qubits directly by adapting QCVV techniques used for physical qubits, such as gate set tomography (GST) \cite{greenbaum_introduction_2015,Nielsen2021gatesettomography} or randomized benchmarking (RB) \cite{emerson_scalable_2005, proctor_what_2017}, to characterize error processes in logical qubits directly \cite{rudinger2023probing,combes2017logical}.

QCVV protocols depend on assumptions about the error processes they characterize, and the validity of those assumptions controls a protocol's reliability when applied (e.g., to logical qubits).  Most QCVV protocols assume that every time a gate is applied, its action is well-described by a particular fixed completely positive trace-preserving (CPTP) map -- regardless of the gate's context (e.g., the time of day, or what gates were applied previously). This assumption is informally known as \emph{Markovianity}, by analogy to the Markov condition for classical stochastic processes \cite{levin2017markov}.  It is broadly understood that physical qubits can display non-Markovian behavior \cite{burkard2009non,white2020demonstration,rudinger_probing_2019}. It stands to reason that if a logical qubit is built from sufficiently non-Markovian physical qubits, then the logical qubit too can display non-Markovian behavior.  In this paper, we investigate the reverse implication:  if a logical qubit's component physical qubits \textit{are} perfectly described by Markovian models, does that imply that the logical qubit's error processes will \emph{also} be Markovian?

We find that the answer is ``no''. We define a precise condition for a qubit's operations (gates) to be Markovian, and then show that even when this Markovianity condition holds for all the physical qubits making up a logical qubit, the failure probability of a \emph{logical} operation can depend on what happened in the past. This is a non-Markovian effect, and it produces a clear signature as non-exponential decay \cite{ceasura_non-exponential_2022} of the logical qubit's polarization, $\langle \Bar{Z} \rangle$.  Investigating further, we find and prove that this emergent non-Markovianity occurs only if the physical data qubits encoding the logical qubit do not return to the code subspace after every round of QEC.  When this condition is satisfied, the overall system retains persistent information outside of the logical qubit subsystem.  The data qubits' Hilbert space can be factored into a logical qubit subsystem and a set of \emph{syndrome qubits}, which act as a persistent environment for the logical qubit that contains memory of the recent past and feeds it back into the logical dynamics to produce detectable non-Markovianity.

Previous work investigated several related topics. It has been periodically observed in the literature \cite{knill_theory_1997,knill_theory_2000, carrozza2024correspondence} that logical qubits can be treated as subsystems, and the study of effective logical maps was initiated in Refs.~\cite{iverson_coherence_2020, rahn_exact_2002, beale_efficiently_2021, bravyi2018correcting}. Several recent papers have sought to understand logical dynamics in the context of non-Markovian noise \cite{tanggara2024strategic, kam2024detrimental, kobayashi2024tensor}.  In particular, Ref.~\cite{ceasura_non-exponential_2022} pointed out that logical non-Markovianity can emerge from an underlying Markovian model and can impact logical randomized benchmarking. Our work extends this observation by giving a precise definition of logical Markovianity that is independent of the theory of RB, and exploring explicit examples of logical non-Markovianity in specific syndrome extraction circuits. We also provide an argument that Markovianity will generically be violated (albeit perhaps by only a small amount) in realistic error correction scenarios, and a conceptual foundation for studying emergent non-Markovianity with general noise models. While preparing this paper for publication we became aware of related work by Endo \textit{et al} \cite{endo2025non}, which also addresses non-Markovianity in QEC but with a different focus (error mitigation) and concept of non-Markovianity (CP-divisibility).

We proceed as follows.  In Section \ref{sec: NM}, we lay out a precise definition of logical non-Markovianity. Then, in Section \ref{sec: example}, we define and study a toy model that demonstrates the existence of emergent non-Markovian logical processes. We construct a more general theory explaining \emph{why} this occurs, rooted in the idea that logical qubits are virtual subsystems of the physical data qubits, in Section \ref{sec: causes}.  We conclude with a summary of our paper's key implications and a brief survey of pressing open questions.

\section{Logical non-Markovianity}\label{sec: NM}

\begin{figure}
    \centering
    \includegraphics[width=0.45\linewidth]{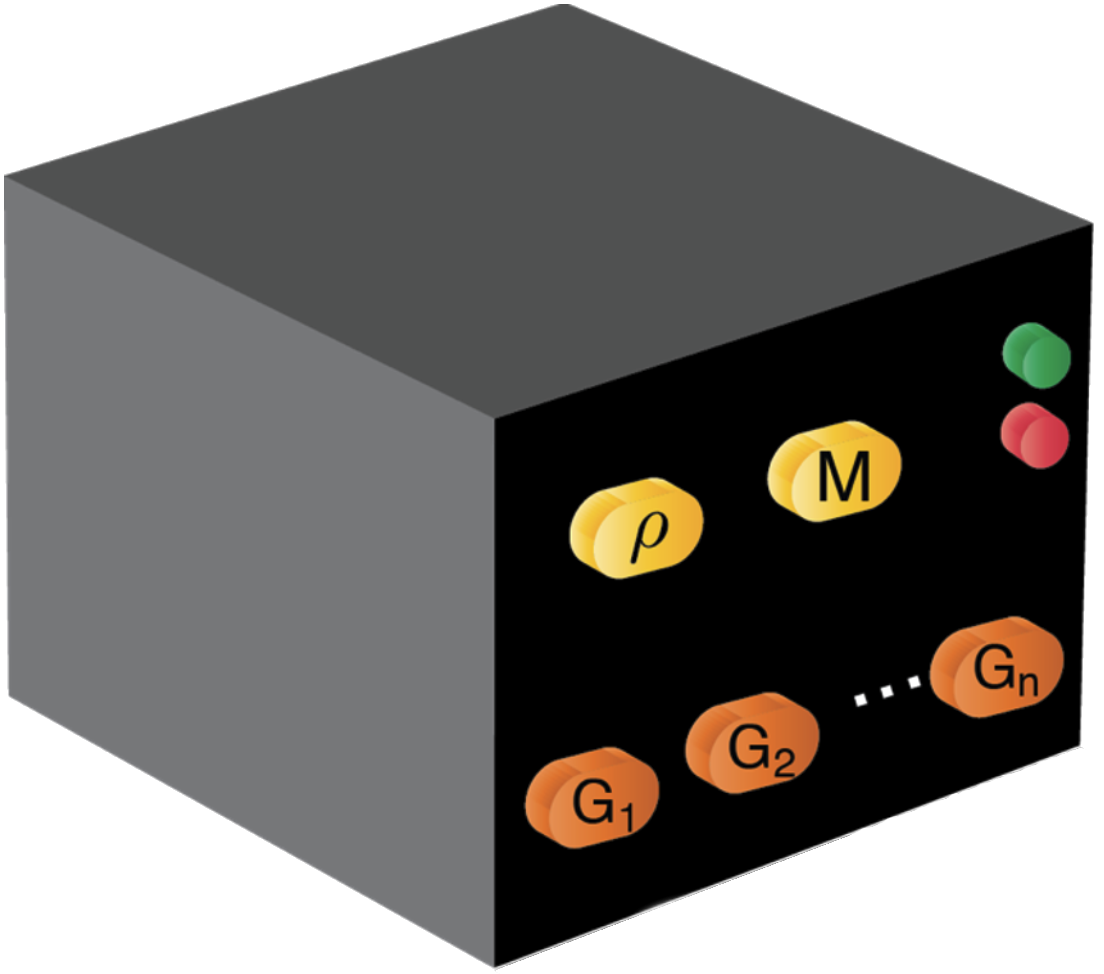}
    \caption{\textbf{We conceptualize a qubit (physical or logical) or a multi-qubit quantum register as a black box.} Logic operations such as initialization, gates, and measurements correspond to buttons that can be pushed in sequence to execute quantum circuits. Using this paradigm, we define \emph{button-theoretic Markovianity} to hold if and only if the quantum register experiences the same dynamical transformation every time a particular button is pressed. Protocols for characterizing physical qubits, such as gate set tomography, typically assume button-theoretic Markovianity. We say that an error-corrected register satisfies \emph{logical} Markovianity if button-theoretic Markovianity holds for the set of all \emph{logical} operations, including the logical idle.}
    \label{fig:blackbox}
\end{figure}

There is an extensive literature \cite{li_concepts_2018,milz_quantum_2021, Rivas2014-review, Wolf2008-hx,Wolf2008-mi,White2022-NMQPT,gorini1976completely, zwanzig1964identity, breuer2009measure, Breuer2016-Colloquium, pollock_non-markovian_2018, Pollock2018-operational} on non-Markovianity (and Markovianity) in quantum systems.  But in that work, Markovianity is almost always considered as a property of a system's autonomous dynamical evolution.  We need a notion of Markovianity that applies to \emph{programmable} quantum computers, whose dynamics depend very much on user inputs. For this purpose, we introduce a “button-theoretic” concept and definition of Markovianity based on the model used in gate set tomography \cite{blume2017demonstration}.  It stems from representing a quantum computer (e.g., a qubit or quantum register) as a black box equipped with buttons that can be pushed to apply assorted logic operations (state preparations, gates, and measurements) to the quantum register inside the box. A device satisfies button-theoretic Markovianity if and only if, every time a given button is pressed, the state of the quantum register is transformed exactly the same way.  In other words, every application of ``gate $a$'' can be represented by a unique, fixed CPTP map $G_a$, regardless of what other operations occur before or after.  It follows that the dynamical map produced by any \emph{sequence} of operations (e.g. ``$a$ then $b$ then $c$'') should coincide with the composition of those operations' maps (e.g., $G_cG_bG_a$). This property depends on the assumed dimension of the quantum computer, as any non-Markovian process can be described as the reduced dynamics of a Markovian process on a larger space (see Definition \ref{def:Buttons} in Appendix \ref{appendix:LM}, for a formal definition of button-theoretic Markovianity).

This assumption almost never holds exactly, but it is often a good approximation for the behavior of physical qubit systems. Moreover, it is \emph{assumed} by most QCVV protocols, such as standard implementations of GST and RB. These protocols infer gate properties from experiments in which long sequences of gates are performed, and thus rely upon the composition rule stated above. If a gate's action might be different every time it is applied, and/or might depend completely on past events, then the metrics or models inferred by GST or RB are unreliable, and may not correctly predict the gate's use in novel circuits. 

We wish to apply this notion of Markovianity to logical qubits.  To do so, we must distinguish physical processes from logical processes, and borrow some terminology and notation from the theory of fault tolerance \cite{gottesman2024surviving}. Table \ref{tab:notation} summarizes the notation of the mathematical objects used in this paper.

 \begin{table*}[]
     \centering
     \begin{tabular}{|c c c|}
        \hline
        Name & Notation & Superset/Domain/Codomain\\
        \hline
        Data qubit space & $\mathcal{H}_{data}$ &\\
        Logical qubit space & $\mathcal{H}_L$ &\\
        Syndrome qubit space & $\mathcal{H}_{Syn}$&\\
        Codespace & $\mathcal{C}$ & $\mathcal{H}_{data}$ \\
        $n$-qubit Pauli group (element)& $\mathcal{P}_n$ ($P_i$) & $\mathcal{H}_{data}$\\
        Stabilizer group (element)& $\mathcal{S}$ ($S$) &$\mathcal{P}_n$\\
        Stabilizer generators & $g_\alpha$&$\mathcal{S}$\\
        Encoding isometry & $E$&$\mathcal{H}_{L} \rightarrow \mathcal{H}_{data}$\\
        Encoding operation & $\mathcal{E}(\cdot) = E(\cdot)E^{\dagger}$ & $\mathcal{B}(\mathcal{H}_{L}) \rightarrow \mathcal{B}(\mathcal{H}_{data})$ \\
        Encoding unitary &  $U_E$&$\mathcal{H}_{Syn}\otimes\mathcal{H}_{L} \rightarrow \mathcal{H}_{data}$\\
        Encoding unitary superoperator & $\mathcal{U}_E = U_E (\cdot) U_E^{\dagger}$& $\mathcal{B}(\mathcal{H}_{Syn}\otimes\mathcal{H}_{L}) \rightarrow \mathcal{B}(\mathcal{H}_{data})$\\
        Correction Operators & $\{R(\s)\}$& $\mathcal{H}_{data}$\\
        Syndrome subspace projector & $\Pi_\s$ & $\mathcal{H}_{data}$\\
        Recovery map &  $\mathcal{R}(\rho) = \sum_\s R(\s) \Pi_\s \rho \Pi_\s R^{\dagger}(\s) $& $\mathcal{B}(\mathcal{H}_{data}) \rightarrow \mathcal{B}(\mathcal{C})\subseteq \mathcal{B}(\mathcal{H}_{data})$\\
        Decoding operation & $\mathcal{D} = \mathcal{E}^{\dagger}\circ\mathcal{R}$&  $\mathcal{B}(\mathcal{H}_{data}) \rightarrow \mathcal{B}(\mathcal{H}_{L})$\\
        Gate set defined on data qubits & $\mathcal{G}$& $\mathcal{B}(\mathcal{H}_{data})$\\
        Gate set defined on logical qubits & $\mathcal{G}_L$& $\mathcal{B}(\mathcal{H}_{L})$\\
        Gadget retraction & $\Omega[\mathcal{A}] = \mathcal{D} \circ \mathcal{A} \circ \mathcal{E}.$&$\mathcal{B}(\mathcal{B}(\mathcal{H}_{data})) \rightarrow \mathcal{B}(\mathcal{B}(\mathcal{H}_{L}))$\\
         \hline
\end{tabular}
     \caption{Notation of the various mathematical objects used in the paper, and their respective supersets, domains and codomains when applicable.}
     \label{tab:notation}
 \end{table*}

Suppose an experimenter wishes to implement a particular ``target'' quantum circuit (sequence of operations).  This can be done in multiple ways.  One of the most interesting is to implement it \emph{fault tolerantly}.  The experimenter implements a (larger) circuit that is equivalent to the target circuit if no errors occur, but which (unlike the original circuit itself) simulates the error-free circuit even if some errors occur. To make this more precise, we consider a target circuit on $k$ qubits that involves a state preparation $\rho_0$, a gate $L$ and a measurement $M$. These operations can be represented mathematically using $\mathcal{B}(\mathcal{H}^k)$, the space of bounded linear operators on a $k$-qubit Hilbert space, where $\mathcal{H} = \mathbb{C}^2$ is the Hilbert space of a single qubit. We represent $\rho_{0}$ by a density matrix in $\mathcal{B}(\mathcal{H}^k)$ describing the desired initial state, $L$ by a CPTP map in $\mathcal{B}(\mathcal{B}(\mathcal{H}^k))$ describing the gate's ideal action, and $M$ by a positive operator-valued measure (POVM) $M= \{ M_i : i = 1, \dots , n_{m}\}$, where $n_{m}$ is a positive integer, and each $M_{i}$ is a positive semidefinite operator in $\mathcal{B}(\mathcal{H}^k)$ associated with observing the outcome ``$i$''. The entire circuit can be represented visually:

\begin{equation}
\begin{quantikz}[column sep= .75cm]
    \inputD{\rho_{0}}&\gate{L}\qwbundle{k}&\meterD{M}
\end{quantikz}.
\end{equation}
 To protect this circuit from noise, we pick an $[[n,k]]$ code associated with an encoding operation, $\mathcal{E}: \mathcal{B}(\mathcal{H}^k) \rightarrow \mathcal{B}(\mathcal{H}^n)$, and a decoding operation $\mathcal{D}: \mathcal{B}(\mathcal{H}^n) \rightarrow \mathcal{B}(\mathcal{H}^k)$. We then design a new circuit over $n$ qubits by replacing each $k$-qubit operation with an $n$-qubit ``gadget'' circuit.  $\rho_{0}$ is replaced by a state preparation gadget that produces the state $\Bar{\rho_{0}} := \mathcal{E}(\rho_{0})\in  \mathcal{B}(\mathcal{H}^n)$. $M$ is replaced by a measurement gadget that implements the POVM $\Bar{M}:= \{ \mathcal{D}^{\dagger}(M_{i}) : i = 1, \dots , n_{m}\} $. And the the gate $L$ is replaced by a gate gadget that applies the CPTP map $G_L\in \mathcal{B}(\mathcal{B}(\mathcal{H}^n))$, where

\begin{equation}\label{eq: gadget retraction}
    \Omega[G_L]:= \mathcal{D} \circ G_L \circ \mathcal{E}=L.
\end{equation}

The map, $\Omega: \mathcal{B}(\mathcal{B}(\mathcal{H}^n)) \rightarrow \mathcal{B}(\mathcal{B}(\mathcal{H}^k))$, was also used in \cite{rahn_exact_2002,beale_efficiently_2021} to define effective logical processes, where for an arbitrary $n$-qubit operation $G$, $\Omega[G]$ is the effective $k$-qubit logical process for $G$. Henceforth, we will call $\Omega$ the \emph{gadget retraction}. In the absence of error, the encoded circuit produces measurement outcomes with the same distribution as those of the original unencoded circuit (we call any two circuits with this property \textit{equivalent}). Unfortunately, in real-world scenarios, no operation can be performed perfectly.  If errors occur in the original unencoded circuit, its outcomes will change.  The point of the fault-tolerant encoding is that, if we do it effectively, at least \textit{some} errors occurring in the gadgets that make up the fault-tolerant circuit will \textit{not} change the circuit's outcome.  

In this paper, we focus on the behavior of faulty \textit{gate} gadgets, so to keep matters simple we will assume that both the state preparation and measurement gadgets are error-free. We will also assume that the $n$-qubit operations on the physical qubits satisfy button-theoretic Markovianity.  Under these assumptions, attempting to implement $G_L$ on the quantum register using noisy error-prone gates will actually apply a noisy gadget for $L$, $\widetilde{G}_L\in \mathcal{B}(\mathcal{B}(\mathcal{H}^n))$:
\begin{equation}
\begin{quantikz}[column sep= .75cm]
    \inputD{\Bar{\rho_{0}}}&\gate{\widetilde{G}_{L}}\qwbundle{n}&\meterD{\Bar{M}}
\end{quantikz}.
\end{equation}

This circuit is equivalent to a $k$-qubit circuit
\begin{equation}
\begin{quantikz}[column sep= .75cm]
    \inputD{\rho_{0}}&\gate{\Omega[\widetilde{G}_{L}]}\qwbundle{k}&\meterD{M}
\end{quantikz}.
\end{equation}
The map $\Omega[\widetilde{G}_{L}]$ is called the effective logical process for $\widetilde{G}_{L}$.

Now consider the process that results from pressing button ``gate $a$" then ``gate $b$". Using the same assumptions as before, it can be represented by the following circuit:
\begin{equation}
\begin{quantikz}[column sep= .75cm]
    \inputD{\Bar{\rho_{0}}}&\gate{\widetilde{G}_{a}}\qwbundle{n}&\gate{\widetilde{G}_{b}}&\meterD{\Bar{M}}
\end{quantikz},
\end{equation}
which is equivalent to the $k$-qubit circuit,
\begin{equation}
\begin{quantikz}[column sep= .75cm]
    \inputD{\rho_{0}}&\gate{\Omega[\widetilde{G_{b}} \circ \widetilde{G_{a}}]}\qwbundle{k}&\meterD{M}
\end{quantikz}.
\end{equation}
The gates are only Markovian over the logical qubit space if this circuit is equivalent to the circuit
\begin{equation}
\begin{quantikz}[column sep= .75cm]
    \inputD{\rho_{0}}&\gate{\Omega[\widetilde{G_{a}}]}\qwbundle{k} &\gate{\Omega[ \widetilde{G_{b}}]}&\meterD{M}
\end{quantikz},
\end{equation}
which is to say if $\Omega[\widetilde{G_b} \circ \widetilde{G_a}]= \Omega[\widetilde{G_b}] \circ \Omega[\widetilde{G_a}]$. We say that a gate or set of gates is \emph{logical gate composable} (see Definition \ref{def:gate composability}) if this condition holds for every gate or pair of gates in the set. We will demonstrate that gadgets -- in particular, QEC gadgets that implement a logical idle operation -- built from physically Markovian gates in realistic settings can violate gate composability, producing effective logical processes that are non-Markovian.

\subsection{Non-Markovianity from syndrome errors}

\begin{figure*}
    \centering
    \includegraphics[width=0.75\linewidth]{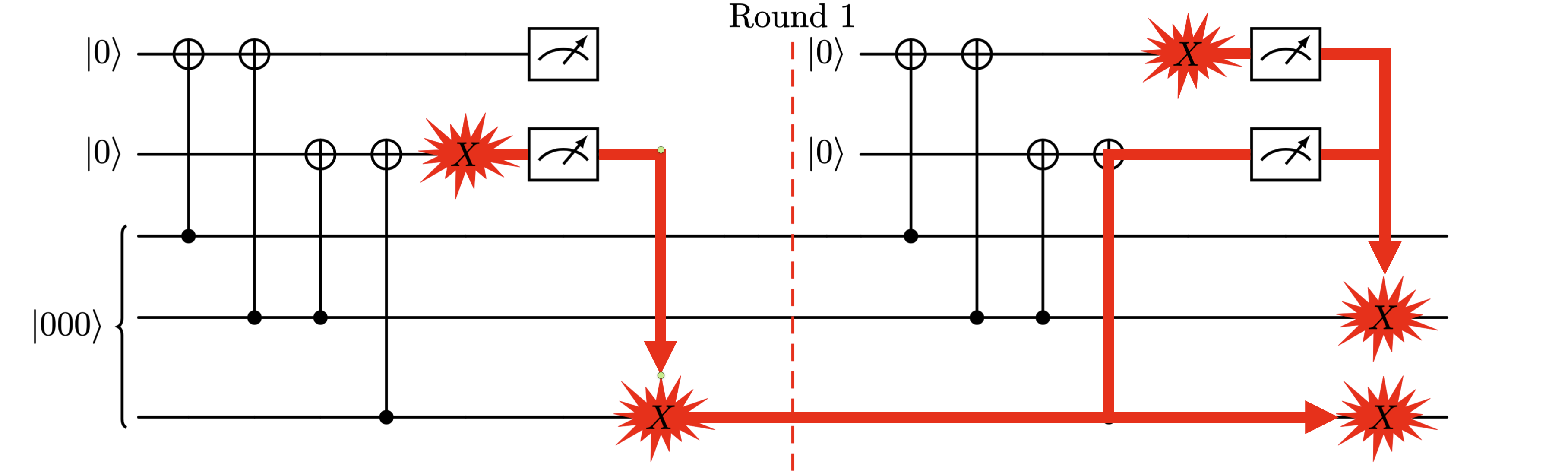}
    \caption{Two rounds of syndrome extraction in the 3-qubit repetition code with bitflip errors on the ancilla qubits. The bitflip in the first round leads to an incorrect readout of the syndrome, causing the decoder pick the wrong correction. The result is a single qubit error on the data qubits that won't change the outcome of an error-corrected logical measurement. In the second round, an ancilla error causes a second bitflip on the data qubits, leading to a combined error that will flip the outcome of an error-corrected logical Z measurement.}
    \label{fig:2rnd}
\end{figure*} 
To see how non-Markovian logical operations can emerge from Markovian physical-qubit operations, we start with a scenario in which they \textit{don't} occur.  Gate composability always holds if each gate in the gate set is a noisy gadget followed by a perfect (error-free) QEC gadget that leaves the system in the code space.  We prove this in Appendix \ref{sec: sufcond} (Theorem \ref{thm: sufcond}), but the intution is straightforward: if an $n$-qubit process both starts and ends in the codespace (which is isomorphic to the factor space associated with the logical $k$-qubit subsystem), then that process is equivalent to a process acting only on the logical qubit[s]. It follows that logical non-Markovianity can emerge from Markovian physical gates only if the QEC itself is imperfect -- i.e., if there are errors in the measured syndrome.

Emergent logical non-Markovianity can be demonstrated explicitly in a simple scenario involving two rounds of a noisy error correction gadget for a CSS code \cite{calderbank1996good,steane1996multiple}. The error correction gadget comprises (1) syndrome extraction, and (2) correction.  In the syndrome extraction step, ancilla qubits are used to perform parity measurements that extract information about errors.  In the correction step, a conditional operation is applied to the data qubits, which depends on the measured syndromes.  If we combine these steps and marginalize over the syndrome measurement's outcomes, we get a \textit{recovery map} acting on the $n$ data qubits, which can be seen as a fault-tolerant implementation of a logical idle gate (or equivalently, a gadget for the identity gate). 

We denote the perfect (noise-free) recovery map by $\mathcal{R}$ (see Equation \ref{eq: idealrecov}).  We consider an error model in which the only errors are in syndrome readout, which isolates the role that syndrome errors play in emergent non-Markovianity. Using the 3-qubit repetition code as an example, the noisy recovery map $\widetilde{\mathcal{R}}$ is given by the following circuit:

\begin{center}
\begin{tikzpicture}
\node[scale=.7] {
\begin{quantikz}
&\gate[3]{\widetilde{\mathcal{R}}}&\\
&&\\
&&\\
\end{quantikz} =
\begin{quantikz}
\lstick{0}&\targ{}&\targ{}&&&\gate[style={fill=red}]{\mathcal{F}}&\meter{}&\setwiretype{c}\rstick[2]{$\mathbf{\s}$}\\
\lstick{0}&&&\targ{}&\targ{}&\gate[style={fill=red}]{\mathcal{F}}&\meter{}&\setwiretype{c}\\
&\ctrl{-2}&&&&&&&\gate[3]{R(\mathbf{\s})}&\\
&&\ctrl{-3}&\ctrl{-2}&&&&&&\\
&&&&\ctrl{-3}&&&&&\\
\end{quantikz}};
\end{tikzpicture}
\end{center}

\noindent where $\mathcal{F}(\rho)= (1-p)\rho + p X\rho X$ is the single qubit bitflip channel with probability $p$, and the corrections are
 \begin{equation}\label{eq: rep corrections}
         R(\mathbf{\s}) =
    \begin{cases}
    I\otimes I\otimes I & \text{if } \mathbf{\s} = (0,0)\\
     X\otimes I\otimes I& \text{if } \mathbf{\s} = (1,0)\\
     I\otimes I\otimes X & \text{if } \mathbf{\s} = (0,1)\\
    I\otimes X\otimes I& \text{if } \mathbf{\s} = (1,1)\\
    \end{cases}.
 \end{equation}
The circuit also represents the noiseless recovery map, $\mathcal{R}$, when $p=0$. When $p>0$, just two applications of the noisy error correction gadget are sufficient to violate gate composability.

In the first round, the only possible errors on the data qubits are corrections (applied incorrectly in response to faulty syndrome data). These do not affect the outcome of any logical measurement, because they only cause \textit{correctable} errors and we assume that each logical measurement incorporates a perfect round of error correction. Therefore, the effective logical process for \textit{one} round of noisy QEC is a perfect logical identity operation (no errors occur). 

Now, if gate composability holds, then the logical process for two rounds should also be the identity operation.  But this is not the case. If the first round induces a correctable error, it leaves the data qubits in a state outside the code space.  This does not constitute a logical error, because if a perfect logical measurement was performed immediately at this point, the results would be correctly decoded.  However, if a \textit{second} correctable error occurs in the next round, it can combine with the leftover error from the first round to create an uncorrectable error.  At this point, performing a perfect Z logical measurement and decoding will yield the incorrect outcome (an example of this is shown in Fig. \ref{fig:2rnd}). This means the logical process for two rounds will \textit{not} be a perfect identity, which implies the violation of gate composability. 

Now that we have demonstrated a clear example of logical non-Markovianity, a few questions remain.  Can these logical non-Markovian dynamics be well-approximated by a Markovian process? If we repeat the gate, is there a point in time after which the logical process becomes approximately Markovian?  This demands a metric for non-Markovianity that can be used to quantify how close to Markovian each timestep is. In lieu of a general theory of logical non-Markovianity, we can begin to answer these questions by generalizing the example given above to other stabilizer codes, and by examining how the expectation value of the logical Z operator decays over many rounds of noisy syndrome extraction.

\section{Logical non-Markovianity in repeated syndrome extraction circuits}\label{sec: example}

We consider QEC experiments on a single ($k=1$) logical qubit that comprise (1) initialization in a computational basis state, (2) repeated applications of noisy syndrome extraction, and (3) measurement of the logical $Z$ operator.  If the logical errors are Pauli and Markovian, then the expectation value of the logical Z operator (``polarization'') will decay exponentially.  Therefore, any deviation from exponential decay indicates non-Markovianity. We consider both the 3-qubit repetition $([3,1,3])$ code \cite{peres1985reversible} and the $[[5,1,3]]$ perfect code \cite{laflamme1996perfect}. Both are stabilizer codes \cite{gottesman1997stabilizer}. We will comment on emergent logical Markovianity in general ($k>1$) stabilizer codes in section \ref{sec: LM stab}.

For this analysis, we assume that the following logical operations can be implemented:
\begin{itemize}
    \item State preparation: we assume that we can perfectly prepare $\ket{\Bar{0}}$ and $\ket{\Bar{1}}$.
    \item Logical idle: We can perform noisy rounds of syndrome extraction representable by the map $\widetilde{\mathcal{R}}$.
    \item Measurement: We assume that a terminal logical Z measurement can be performed without error. We also assume that each logical measurement is preceded by an ideal round of QEC, representable by the map $\mathcal{R}$. 
\end{itemize}

For the codes considered, we construct mathematical models of the various components of the circuits ($\ket{\Bar{0}}$, $\mathcal{R}$, etc.) using the stabilizer code formalism, which we review quickly here.

\subsection{Stabilizer codes}
Stabilizer codes \cite{gottesman1997stabilizer} are quantum codes whose codespace is the $+1$ eigenstate of a set of Pauli operators called a stabilizer group.  A stabilizer group $\mathcal{S}$ is an abelian subgroup of the $n$-qubit Pauli group $\mathcal{P}_n = \{I,X,Y,Z\}^{\otimes n}$ that does \textit{not} contain $-I$ (the negative identity).  This group $\mathcal{S}$ defines a stabilizer subspace $\mathcal{C} \equiv \{\ket{\psi} : S\ket{\psi}= \ket{\psi} \quad \forall S \in \mathcal{S}\}$.  If the subspace $\mathcal{C}$ satisfies the properties of a quantum error correcting code, it is a \textit{stabilizer code}.  The stabilizer group of an $[[n,k]]$ code can be generated by a set of $n-k$ stabilizer generators, i.e., $\{g_\alpha, \alpha = 1, 2, \dots, n-k\}$, such that  $\langle \{ g_\alpha \} \rangle =\mathcal{S}$.

A code's stabilizer generators are commuting Pauli operators, so they can in principle be measured, yielding a string of $n-k$ bits.  This string is called a \emph{syndrome}, and the state of the $n$ data qubits is in the code space if and only the syndrome is $0^{\otimes n-k}$. Performing perfect syndrome extraction and observing the syndrome $\mathbf{s}= (s_1, \cdots, s_{n-k})$ projects the data qubits into a \textit{syndrome subspace} indexed by $\mathbf{s}$. The associated projector is
\begin{equation}\label{eq: syn proj}
\Pi_{s}= \prod_{j=1}^{n-k}\frac{(I + (-1)^{s_{j}}g_{j})}{2}
\end{equation}
We say an $n$-qubit Pauli operator $P$ \emph{has} the syndrome $\mathbf{s}$ if
\begin{equation}
    Pg_i = (-1)^{s_{i}}g_{i}P,\quad i = 1,\dots,n-k.
\end{equation}

To perform error correction on a stabilizer code, the syndrome is measured, and then a \textit{correction} (a.k.a. ``recovery operator''), chosen to correct the observed syndrome, is applied.  Its purpose is to return the data qubits to the code space.  For stabilizer codes, the correction can always be chosen to be a Pauli operator. The ideal recovery map for one error-free round of QEC (syndrome extraction followed by correction) on a stabilizer code can be written in the form
\begin{equation}\label{eq: idealrecov}
    \mathcal{R}(\rho) = \sum_{\mathbf{s}}R(\mathbf{s}) \Pi_{\mathbf{s}} \rho \Pi_{\mathbf{s}} R(\mathbf{s}),
\end{equation}
where $R(\mathbf{s})$ is a Pauli correction with syndrome $\mathbf{s}$. 

Pauli operators that map a stabilizer code space to itself (i.e., do not change the syndrome) correspond to the normalizer $N(\mathcal{S})$ of the stabilizer group, which is the set of $n$-qubit Pauli operators that commute with every operator in $\mathcal{S}$ and therefore have a trivial syndrome. The set of cosets, $N(\mathcal{S})/S$,  is isomorphic to the $k$-qubit Pauli group. For a stabilizer code encoding a single logical qubit, one Pauli operator in $N(\mathcal{S})$ can be chosen to represent the logical $Z$ operator, denoted $\Bar{Z}$. The choice of $\Bar{Z}$, together with the stabilizers of the code, defines the logical computational basis states, $\ket{\Bar{0}}$ and $\ket{\Bar{1}}$, which are the eigenstates in the codespace of the $+1$ and $-1$ eigenvalues of $\Bar{Z}$, respectively.

\subsection{Repeated noisy syndrome extraction}

In systems where only single qubit measurements are natively performed, syndrome extraction requires measurement of ancilla qubits. We model syndrome extraction errors by assuming that each ancilla is flipped, before measurement, with probability $p$.  If one or more ancillae experience errors, the syndrome will be measured incorrectly, and the wrong correction operator will be applied. If we marginalize over the results of the syndrome measurements, a single round of error correction applies the map
\begin{equation}\label{eq: noisyrecov}
    \widetilde{\mathcal{R}}(\rho) = \sum_{\s, \s'} \chi_{\s, \s'} R(\s') \Pi_{\s} \rho \Pi_{\s} R(\s'),
\end{equation}
where $\chi_{\s, \s'}= \mathrm{Pr}(s'|s) =  (1-p)^{2-h(\s,\s')} p^{h(\s,\s')}$ is a confusion matrix \cite{beale_efficiently_2021} that represents the probability of measuring $\s'$ when the true syndrome is $\s$, and $h(\s,\s')$ is the Hamming distance between $\s$ and $\s'$. In our analysis, we will use ``$\widetilde{\mathcal{R}}$'' to represent the quantum error correction gadget.

The circuits we consider are of the following form:
\begin{center}
\begin{tikzpicture}
\node[scale=.8] {
\begin{quantikz}
\gate[3]{|\Bar{0}\rangle \text{ or } |\Bar{1}\rangle}&\gate[3]{\widetilde{\mathcal{R}}}&\gate[3]{\widetilde{\mathcal{R}}}& \gate[3]{\widetilde{\mathcal{R}}}&\  \ldots \ 
&\gate[3]{\widetilde{\mathcal{R}}}&\gate[3]{\mathcal{R}}&\meter[3]{\Bar{Z}}\\
&&&&\  \ldots \ &&&\\
&&&&\  \ldots \ &&&\\
\end{quantikz}
};
\end{tikzpicture}
\end{center}

These circuits can be combined analytically to determine the formula for the polarization $q_m = \langle\Bar{Z}\rangle$ after $m$ rounds of noisy syndrome extraction:

\begin{equation}\label{eq:circuit probs}
   q_m =  Tr\left[\Bar{Z}\mathcal{R}\circ\widetilde{\mathcal{R}}^{m}(\mathcal{E}(Z/2)\right]. 
\end{equation}

The polarization can be thought of as the (error-corrected) expectation value of $\Bar{Z}$ after $m$ rounds of QEC when the system state is initialized in the (non-physical) state $\mathcal{E}(Z/2)$. The map $\mathcal{E}$ is the encoding operation for a code (see Appendix \ref{appendix:logical dynamics} for full details on the definition), and $\mathcal{E}(Z)$ is the restriction of $\Bar{Z}$ to the code space. For example, in the repetition code,
\begin{equation}
    \mathcal{E}(Z) = \dyad{000}  -\dyad{111}.
\end{equation}
The operator $\mathcal{E}(Z/2)$ is not a quantum state, but a circuit that starts with $\mathcal{E}(Z/2)$ as initial state can be emulated by running two circuits, one initialized with the state $\ket{\Bar{0}}$ and one with $\ket{\Bar{1}}$. A weighted average of the outcomes of both circuits yields the expectation value corresponding to the ``synthetic''\footnote{This is an example of a synthetic circuit \cite{fan2024randomized}, a linear combination of physical circuits that allows for the emulation of non-physical operators.} initial state $\mathcal{E}(Z/2)$. The factor of $1/2$ (using $\mathcal{E}(Z/2)$ instead of $\mathcal{E}(Z)$) ensures that $q_m =1$ in the absence of error. Preceding the logical measurement with a perfect round of QEC ensures that the data qubits' state is always in the codespace before the measurement, so that measuring $\mathcal{E}(Z)$ is equivalent to measuring $\Bar{Z}$. Therefore, an equivalent expression for the polarization can be derived:
\begin{align}
   q_m &= Tr\left[\mathcal{E}(Z)\mathcal{R}\circ\widetilde{\mathcal{R}}^{m}(\mathcal{E}(Z/2)\right]\\
    & =  Tr\left[\mathcal{D}^{\dagger}(Z)\widetilde{\mathcal{R}}^{m}(\mathcal{E}(Z/2)\right].\label{eq: polarization}
\end{align}

We call the map $\mathcal{D}\equiv \mathcal{E}^{\dagger}\circ \mathcal{R}$ the \textit{decoding operation}. The operator $\mathcal{D}^{\dagger}(Z)$ can be thought of, informally, as an ``error-corrected'' $\Bar{Z}$.  For the 3-qubit repetition code,

\begin{equation}
    \begin{split}
    \mathcal{D}^{\dagger}(Z) &= \dyad{000}  -\dyad{111}\\
                            & +\dyad{001} -\dyad{110}\\
                            & +\dyad{010}  -\dyad{101}\\
                            & +\dyad{100}  -\dyad{011}.
\end{split} 
\end{equation}

Any state that decodes to $\ket{0}$ has an eigenvalue of $+1$ and any state that decodes to $\ket{1}$ has an eigenvalue of $-1$. For CSS codes like the 3-qubit repetition code, instead of measuring $\Bar{Z}$ with multiqubit operations, a transversal measurement can be performed, where each data qubit is measured in the $Z$ basis. This is followed by classical error correction to determine the value of $\Bar{Z}$. If the corrections used to define $\mathcal{D}$ are the same as the ones involved in this classical error-correction step, then this transversal measurement is equivalent to measuring $\mathcal{D}^{\dagger}(Z)$. 

Focusing on $\mathcal{D}^{\dagger}(Z)$ enables relating $q$ to effective logical processes. Using the definition of the gadget retraction (Eq. \ref{eq: gadget retraction}),  Equation \ref{eq: polarization} can be rewritten:
\begin{align}
  q_m &=  Tr\left[\mathcal{D}^{\dagger}(Z)\widetilde{\mathcal{R}}^{m}(\mathcal{E}(Z/2))\right] \\
  & = Tr\left[Z \Omega[\widetilde{\mathcal{R}}^{m}]( Z/2)\right].
\end{align}
By determining $q_m$, we are essentially characterizing the dynamical maps, $\Omega[\widetilde{\mathcal{R}}^{m}]$, for $m=1,2\ldots$, which are the effective ``logical idle'' processes describing $m$ repeated rounds of noisy QEC. As discussed previously, $\widetilde{\mathcal{R}}$ is not necessarily gate composable, so $\Omega[\widetilde{\mathcal{R}}^{m}]\neq \Omega[\widetilde{\mathcal{R}}]^{m}$, which allows for non-Markovian behavior. 

We now analyze this phenomenon in two specific cases, the three-qubit repetition code, and the 5-qubit $[[5,1,3]]$ code.

\begin{figure*}
    \centering
     \begin{subfigure}[t]{0.505\linewidth}
        \centering
        \includegraphics[width=\linewidth]{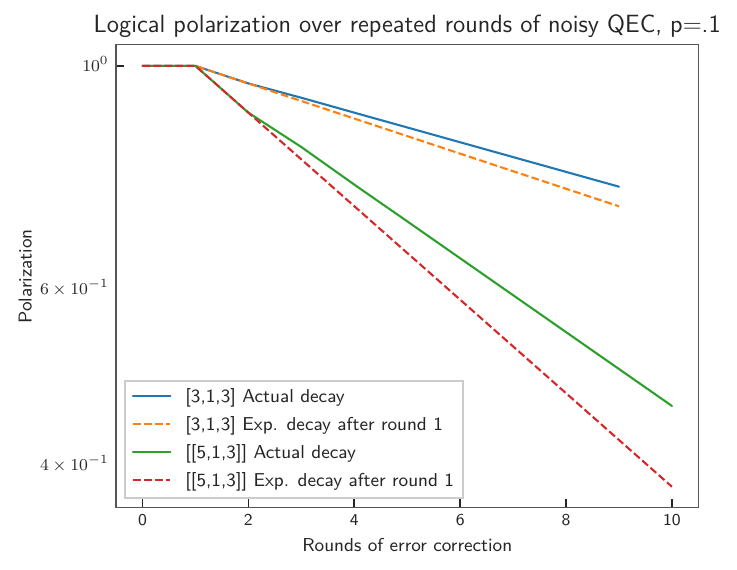}
        \caption{}
        \label{fig:surv}
    \end{subfigure}
    \hfill
    \begin{subfigure}[t]{0.475\linewidth}
        \centering
        \includegraphics[width=\linewidth]{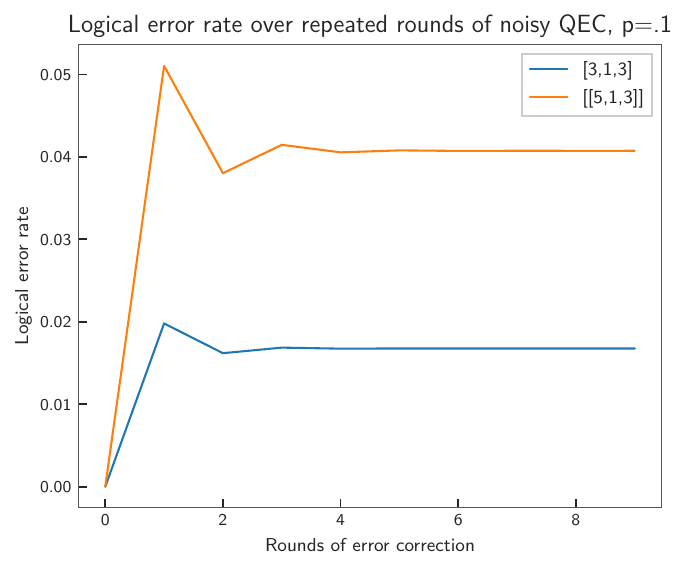}
        \caption{}
        \label{fig:decay}
    \end{subfigure}
    \vspace{1cm}
    \begin{subfigure}[t]{0.505\linewidth}
        \centering
        \includegraphics[width=\linewidth]{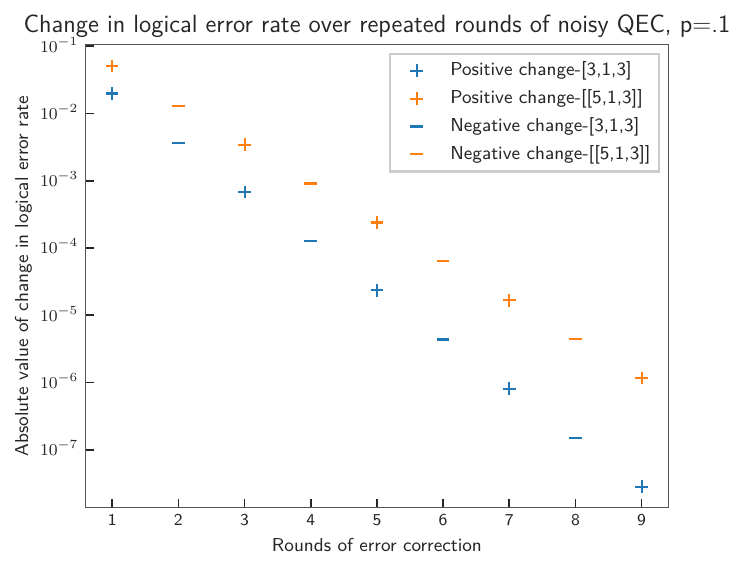}
        \caption{}
        \label{fig:decaychange}
    \end{subfigure}

    \caption{ (\subref{fig:surv}) Polarization in the $[3,1,3]$ code and the $[[5,1,3]]$ code under repeated rounds of noisy syndrome extraction. A constant exponential decay is expected for a Markovian logical process. In log scale, non-Markovianity is demonstrated by a deviation of the decay from a straight line. The decay curve becomes quickly indistinguishable from straight after just a few rounds. This indicates logical processes where most of the non-Markovianity is relegated to the first few rounds. (\subref{fig:decay}) The logical error rate per round of the $[3,1,3]$ code and the $[[5,1,3]]$ code. The absolute value of the change in error rate is plotted in (\subref{fig:decaychange}). It is clear from examining the raw data that the error rate change falls of exponentially in the number of rounds until the changes fall below the machine precision ($\sim 10^{-15}$). In summary, the data indicate logical processes where most of the non-Markovianity is relegated to the first few rounds, yet persists indefinitely.\label{fig:polarization}}
\end{figure*}

\subsubsection{The 3-qubit repetition code}

The decay of the logical qubit polarization resulting from repeated rounds of noisy syndrome extraction in the 3-qubit repetition code is shown in Fig.~\ref{fig:polarization}. Also shown is the logical error rate in each round, $m$, which is equal to $(1-q_m/q_{m+1})/2$. In this model, only single-qubit errors are injected on the data qubits in each round, so the first round will have no logical errors, after which the polarization will decay. An exponential decay with a constant rate is expected for a Markovian gate. The change in logical error rate between round $m$ and $m+1$, shown in Fig. \ref{fig:decaychange}) demonstrates a logical error rate that changes each round but whose changes die off exponentially. This indicates a logical process where most of the non-Markovianity occurs during the first few rounds but persists indefinitely. If this is true of the logical processes of more realistic quantum computations, it would allow for logical tomography because the first few rounds could be characterized separately from later rounds (i.e., with those rounds acting differently than subsequent rounds). Just considering a single code, it is unclear how this generalizes to other codes, but it can be proven that the same qualitative behavior is expected for any Pauli channel in a realistic quantum error correction setting \cite{kwiatkowski2025constructing}. Next we provide an example of a distance-3 \textit{quantum} code that exhibits the same non-Markovianity.

\subsubsection{The [[5,1,3]] code}

Now we consider the decay of the polarization in the $[[5,1,3]]$ code \cite{yoder2016universal}. First, we implicitly define its noisy recovery map, $\widetilde{\mathcal{R}}$, by defining its stabilizer group, logical Z, and corrections. The stabilizer group and logical Z for the $[[5,1,3]]$ code are

\begin{equation}\label{eq:stabgens}
    \mathcal{S}_5 = \Biggl \langle \begin{matrix}
        ZZXIX\\
        XZZXI\\
        IXZZX\\
        XIXZZ
    \end{matrix} \Biggr \rangle ,
\end{equation}
\begin{equation}
    \Bar{Z}_5 = -XIZIX.
\end{equation}

The set of corrections can be inferred by recognizing that each syndrome corresponds to a unique single qubit Pauli that has that particular syndrome. The decay of the polarization that results from repeated noisy syndrome extraction rounds is given in Fig.~\ref{fig:polarization}.

The decay is qualitatively similar to what we observed for the $[3,1,3]$ code: there is a large amount of initial non-Markovianity that decays away rapidly to yield an approximately Markovian process at long times. However, in the 5-qubit code there are more ways for single-qubit errors to combine into logical errors, and this increases the initial change in error rate compared to the $[3,1,3]$ code.  Just as in the 3-qubit repetition code, each syndrome error induces at most a weight-one error on the data qubits, and every weight-two error leads to decoder failure. This is because every non-trivial element of the stabilizer group is a weight-4 Pauli, and each correctable error is stabilizer-equivalent to a weight-one operator. When a weight-one operator is multiplied by another Pauli, the weight of the product can differ from the latter by at most one, so no weight-two operators are correctable. A weight-two error occurs in two rounds only if the syndrome errors in the two rounds are both non-trivial and non-identical. In the $[3,1,3]$ code the probability of a error in the first round is, to leading order, the probability that either of the 2 ancillae has an error, i.e. $2p$. The probability that a \emph{different} error occurs in the next round is, to leading order, $p$. In the 5-qubit code there are 4 ancilla qubits, so, to leading order the probability that anyone of them will have a single qubit error will be $4p$. The probability that a different error will occur in the second round is, to leading order, $3p$. It turns out that, for the chosen stabilizer generators (Eq. \ref{eq:stabgens}), half of the weight-two errors generated by weight-one syndrome errors are X and Y logical errors, which will affect the polarization. So to leading order, logical errors are more likely to occur in the $[[5,1,3]]$ code than in the $[3,1,3]$ repetition code. It should be noted that this is fully the consequence of these circuits being non-fault-tolerant, so we expect a fault-tolerant treatment of these experiments to behave differently.

\subsubsection{Logical non-Markovianity in stabilizer codes}\label{sec: LM stab}

We have demonstrated logical non-Markovianity in at least two codes. We now investigate whether it is expected to occur in a broader class of codes. As we will show, in any stabilizer code that can correct single qubit errors, as long as there is the possibility of any syndrome extraction error, then for a model of a QEC round whose only noise derives from these syndrome errors, repeated application of the noisy QEC round will produce logical non-Markovianity. The intuition behind this is that for a code with an adequate distance and set of corrections, there are always two corrections that can combine to form an uncorrectable error. Because syndrome extraction is noisy, there will be a possibility that the decoder will erroneously apply these two corrections in subsequent rounds. The application of the correction in the first round will only affect the syndrome information, but the application of the second correction in round two can affect both syndrome and logical information, in a scenario analogous to the repetition code example described in Fig \ref{fig:2rnd}.  This will violate logical gate composability, because in order to accurately reconstruct the process on the logical qubits, the correlations between the logical and syndrome information of a state must be known. This is summarized by the following theorem:

\begin{theorem}\label{thm: stab}
    Consider an $[[n,k,d]]$ stabilizer code. Let $\{g_\alpha, \alpha = 1, 2, \dots, n-k\}$ be a set of $n-k$ stabilizer generators for this code. Consider a set of corrections, $\mathfrak{Corrections}= \{R(\s)  : 
\s \in \{0,1\}^{\otimes n-k}\} $, such that for $R(s)\in \mathfrak{Corrections}$, $ R(s) \in  \mathcal{P}_n$, and $R(s)g_i = (-1)^{s_{i}}g_{i}R(s)$ for all $i$. Consider a noisy round of syndrome extraction represented by the map,
\begin{equation}\label{eq: general noisyrecov}
    \widetilde{\mathcal{R}}(\rho) = \sum_{\s, \s'} C_{\s, \s'} R(\s') \Pi_{\s} \rho \Pi_{\s} R(\s')
\end{equation}
where $\sum_\s C_{\s, \s'} = 1$, $C_{\s, \s'}>0$ $\forall \s, \s' \in \mathbb{Z}_{2}^{\otimes n-k}$, and $\Pi_{\s}$ is defined as in Equation \ref{eq: syn proj}. Let $\Omega$ be the gadget retraction for the code and set of corrections. If $d\geq 3$, then
\begin{equation}\label{eq: composability violation}
    \Omega[\widetilde{\mathcal{R}} \circ \widetilde{\mathcal{R}}]\neq \Omega[\widetilde{\mathcal{R}}] \circ \Omega[\widetilde{\mathcal{R}}].
\end{equation}
Logical gate composability is violated for the process.
\end{theorem}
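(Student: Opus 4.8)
The plan is to reduce the claimed inequality to a concrete algebraic statement about the correction Paulis by computing both effective logical maps explicitly, and then to use the hypothesis $d\ge 3$ to exhibit a pair of syndromes that witnesses the inequality.

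First I would evaluate $\Omega[\widetilde{\mathcal{R}}]=\mathcal{E}^{\dagger}\circ\mathcal{R}\circ\widetilde{\mathcal{R}}\circ\mathcal{E}$ on an arbitrary logical input $\rho$. Since $\mathcal{E}(\rho)$ is supported on the codespace and the $\Pi_{\s}$ are orthogonal projectors summing to the identity, only the $\s=\mathbf{0}$ term of $\widetilde{\mathcal{R}}$ survives, giving $\sum_{\s'}C_{\mathbf{0},\s'}\,R(\s')\mathcal{E}(\rho)R(\s')$, where the $\s'$-term lives in the syndrome subspace labelled by $\s'$ because $R(\s')$ has syndrome $\s'$. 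The terminal ideal recovery $\mathcal{R}$ then reapplies $R(\s')$ to that term, returning it to the codespace as $R(\s')^{2}\mathcal{E}(\rho)R(\s')^{2}=\mathcal{E}(\rho)$; using trace preservation $\sum_{\s'}C_{\mathbf{0},\s'}=1$ and $\mathcal{E}^{\dagger}\circ\mathcal{E}=\mathrm{id}$ gives $\Omega[\widetilde{\mathcal{R}}]=\mathrm{id}_{L}$ (also a consequence of Theorem~\ref{thm: sufcond}, since $\mathcal{R}\circ\widetilde{\mathcal{R}}$ ends in the codespace), hence $\Omega[\widetilde{\mathcal{R}}]\circ\Omega[\widetilde{\mathcal{R}}]=\mathrm{id}_{L}$. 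Iterating the same bookkeeping for two noisy rounds — tracking which syndrome subspace the state occupies after each map — gives
\begin{equation}
\Omega[\widetilde{\mathcal{R}}\circ\widetilde{\mathcal{R}}](\rho)=\sum_{\s_{1},\s_{2}}C_{\mathbf{0},\s_{1}}\,C_{\s_{1},\s_{2}}\;\bar{L}(\s_{1},\s_{2})\,\rho\,\bar{L}(\s_{1},\s_{2})^{\dagger},
\end{equation}
where $\bar{L}(\s_{1},\s_{2})=E^{\dagger}P(\s_{1},\s_{2})E$ is the logical action of $P(\s_{1},\s_{2}):=R(\s_{1}\oplus\s_{2})R(\s_{2})R(\s_{1})$. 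The syndrome of $P(\s_{1},\s_{2})$ is $(\s_{1}\oplus\s_{2})\oplus\s_{2}\oplus\s_{1}=\mathbf{0}$, so $P(\s_{1},\s_{2})\in N(\mathcal{S})$ and $\bar{L}(\s_{1},\s_{2})$ is a genuine logical Pauli; Pauli phases drop out under conjugation and stabilizer factors act trivially on the codespace, so $\bar{L}(\s_{1},\s_{2})$ depends only on the logical class of $P(\s_{1},\s_{2})$.

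Because every $C_{\s,\s'}>0$, this Pauli channel equals $\mathrm{id}_{L}$ if and only if every $\bar{L}(\s_{1},\s_{2})$ is, up to phase, the logical identity. So the theorem reduces to producing syndromes $\s_{1},\s_{2}$ for which $R(\s_{1}\oplus\s_{2})R(\s_{2})R(\s_{1})$ is a nontrivial logical operator. To build such a pair I would use the distance: fix a minimum-weight logical Pauli $\bar{L}$, $\mathrm{wt}(\bar{L})=d$, and split its support into two disjoint sets so that $\bar{L}=AB$ with $\mathrm{wt}(A),\mathrm{wt}(B)\le 2t$, where $t=\lfloor(d-1)/2\rfloor$ (possible since $d\ge 3$ forces $\lceil d/2\rceil\le 2t$). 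As $\bar{L}\in N(\mathcal{S})$, $A$ and $B$ have the same syndrome, yet $AB=\bar{L}\notin\mathcal{S}$ puts them in different stabilizer cosets, so the single correction $R(\mathrm{syn}(A))=R(\mathrm{syn}(B))$ cannot return both to the codespace: at least one of them — call it $E$, with $\mathrm{wt}(E)\le 2t$ — satisfies $R(\mathrm{syn}(E))\,E=:\bar{L}'\in N(\mathcal{S})\setminus\mathcal{S}$, a nontrivial logical. Now factor $E=E_{1}E_{2}$ into disjoint pieces of weight $\le t$; a legitimate distance-$d$ decoder corrects every error of weight $\le t$, so $R(\mathrm{syn}(E_{i}))\equiv E_{i}\pmod{\mathcal{S}}$. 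Taking $\s_{1}=\mathrm{syn}(E_{1})$, $\s_{2}=\mathrm{syn}(E_{2})$ one has $\s_{1}\oplus\s_{2}=\mathrm{syn}(E)$, so $R(\s_{1}\oplus\s_{2})R(\s_{2})R(\s_{1})\equiv R(\mathrm{syn}(E))\,E_{2}E_{1}\equiv\bar{L}'E^{-1}E_{2}E_{1}\propto\bar{L}'\pmod{\mathcal{S}}$, which is a nontrivial logical. This is the witnessing pair, establishing \eqref{eq: composability violation}.

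The main obstacle is precisely this last step — in particular, guaranteeing that the pieces $E_{1},E_{2}$ are decoded correctly. That requires the $R(\s)$ to form a bona fide decoder (one that corrects all errors of weight $\le\lfloor(d-1)/2\rfloor$), which I would add as an explicit hypothesis, since for arbitrary Paulis with the prescribed syndromes the conclusion can fail: a "homomorphic" assignment with $R(\s_{1}\oplus\s_{2})\equiv R(\s_{1})R(\s_{2})\pmod{\mathcal{S}}$ makes every $P(\s_{1},\s_{2})$ a stabilizer, so $\Omega[\widetilde{\mathcal{R}}\circ\widetilde{\mathcal{R}}]=\mathrm{id}_{L}$ and the process is exactly Markovian. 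I would also verify the small combinatorial facts used ($t\ge 1$, the existence of the weight-balanced split, and that weight-$\le t$ errors sit in well-defined cosets), all of which rely on $d\ge 3$, and handle degenerate cases — weight-one stabilizers, or codes such as the repetition code whose decoder corrects only one Pauli sector — by restricting to the relevant sector.
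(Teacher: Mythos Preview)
Your overall strategy matches the paper's: compute $\Omega[\widetilde{\mathcal{R}}]=\mathcal{I}_k$, expand $\Omega[\widetilde{\mathcal{R}}\circ\widetilde{\mathcal{R}}]$ as a logical Pauli channel indexed by pairs $(\s_1,\s_2)$, and reduce the inequality to exhibiting a pair for which $R(\s_1\oplus\s_2)R(\s_2)R(\s_1)$ lies in $N(\mathcal{S})\setminus\mathcal{S}$. The paper isolates this last step as a separate lemma (Lemma~\ref{lemma}) and proves it more directly than you do: it starts from an uncorrectable Pauli $P^\star$ of weight $\lceil(d+1)/2\rceil$ and splits it \emph{once} into two weight-$\le\lfloor(d-1)/2\rfloor$ pieces, which are then identified with two corrections. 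Your route --- split a weight-$d$ logical into two weight-$\le 2t$ pieces to \emph{manufacture} an uncorrectable error $E$, then split $E$ again into weight-$\le t$ pieces --- reaches the same destination via one extra decomposition, but has the virtue that the uncorrectable $E$ is exhibited constructively rather than asserted to exist. Both arguments rely on the corrections actually decoding all weight-$\le t$ errors; the paper slips this in as a ``without loss of generality'' inside the proof of Lemma~\ref{lemma}, whereas you correctly flag it as an additional hypothesis and supply the homomorphic-corrections counterexample showing that without it the theorem as literally stated can fail.
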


Proving this statement requires two observations. The first is that choosing a stabilizer code and a set of corrections defines a splitting of each Pauli into a logical component and a correction component:

\begin{proposition}\label{prop}
    Consider an $[[n,k,d]]$ stabilizer code with a stabilizer group $\mathcal{S}$ and a set of corrections $\{R(\s) : 
\s \in \{0,1\}^{\otimes n-k}\} $. Then $\forall P \in \mathcal{P}_n$,
\begin{equation}\label{eq:logical decomp}
    P = L R(\s)
\end{equation}
where $L$ commutes with the stabilizer group ($L\in N(\mathcal{S})$) and $R(\s)$ is a correction. We call $L$ the logical component of $P$, and $R(\s)$ the correction component of $P$. 
\end{proposition}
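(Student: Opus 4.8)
The plan is to produce the decomposition explicitly by taking $\mathbf{s}$ to be the syndrome of $P$ itself, and then showing that the ``leftover'' factor $L := P\,R(\mathbf{s})$ automatically has trivial syndrome. First I would recall the elementary fact that any $P \in \mathcal{P}_n$ either commutes or anticommutes with each stabilizer generator $g_i$, so there is a well-defined \emph{syndrome map} $P \mapsto \mathbf{s}(P) = (s_1(P),\dots,s_{n-k}(P)) \in \mathbb{Z}_2^{n-k}$ determined by $P g_i = (-1)^{s_i(P)} g_i P$. The key structural property is that this map is a group homomorphism onto $(\mathbb{Z}_2^{n-k},\oplus)$: pushing $g_i$ past a product $PQ$ one factor at a time gives $\mathbf{s}(PQ) = \mathbf{s}(P)\oplus\mathbf{s}(Q)$ in one line, and the overall scalar picked up by the Pauli product $PQ$ is irrelevant since it commutes with everything.

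Next, given an arbitrary $P \in \mathcal{P}_n$, set $\mathbf{s} := \mathbf{s}(P)$ and let $R(\mathbf{s})$ be the correction assigned to this syndrome by the chosen correction set; by hypothesis $R(\mathbf{s}) \in \mathcal{P}_n$ and $\mathbf{s}(R(\mathbf{s})) = \mathbf{s}$. Since every element of $\{I,X,Y,Z\}^{\otimes n}$ squares to the identity, $R(\mathbf{s})^{-1} = R(\mathbf{s})$, so I may define $L := P\,R(\mathbf{s})$ and immediately read off $P = L\,R(\mathbf{s})$. It then remains only to verify $L \in N(\mathcal{S})$. By the homomorphism property, $\mathbf{s}(L) = \mathbf{s}(P)\oplus\mathbf{s}(R(\mathbf{s})) = \mathbf{s}\oplus\mathbf{s} = 0$, so $L$ commutes with every $g_i$ and hence with all of $\mathcal{S}$, i.e., $L$ lies in the normalizer $N(\mathcal{S})$. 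This completes the proof.

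I do not expect a genuine obstacle here: the proposition is essentially the statement that the syndrome map is a group homomorphism and that the correction set is, by construction, a section of it. The only points needing a moment of care are bookkeeping. First, $L = P\,R(\mathbf{s})$ is a product of two elements of $\{I,X,Y,Z\}^{\otimes n}$, which in general carries a global factor in $\{\pm 1,\pm i\}$; one should therefore read ``$L$ commutes with $\mathcal{S}$'' and ``$L$ is a Pauli'' as allowing this phase, which is harmless since phases commute with everything and drop out of the recovery map and the gadget retraction. Second, it is worth noting in passing that $\mathbf{s}$ is in fact \emph{forced} to equal $\mathbf{s}(P)$ --- any other choice would make $\mathbf{s}(L)\neq 0$ --- so that, for a fixed correction set, the decomposition $P = L\,R(\mathbf{s})$ is unique; this makes the ``logical component''/``correction component'' terminology introduced afterward unambiguous, even though uniqueness is not strictly required by the statement.
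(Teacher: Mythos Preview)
Your proof is correct and in fact somewhat cleaner than the paper's. The paper obtains the decomposition by first invoking the standard logical--stabilizer--destabilizer factorization $P = L' S D(\mathbf{s})$, then writing the correction with the same syndrome as $R(\mathbf{s}) = L'' S' D(\mathbf{s})$, and finally eliminating the common destabilizer factor to conclude $P = (L'L''SS')\,R(\mathbf{s})$. Your argument bypasses destabilizers entirely: you just observe that the syndrome map is a homomorphism $\mathcal{P}_n \to \mathbb{Z}_2^{n-k}$ and that the correction set is a section of it, so $L := P\,R(\mathbf{s}(P))$ lands in the kernel $N(\mathcal{S})$ automatically. This is more direct and also makes transparent the uniqueness of $\mathbf{s}$ (and hence of the decomposition for a fixed correction set), which the paper's argument does not highlight. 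The only thing the destabilizer route buys is an explicit factoring of $L$ into a pure-logical piece and a stabilizer piece, but that refinement is not needed for the proposition as stated.
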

The correction component is the factor of a Pauli that would be corrected during (and therefore not affect) an error-corrected logical measurement when it is applied to a codespace state. The logical component, on the other hand, can affect the logical measurement outcome. 
\begin{proof}

The proposition is a straightforward consequence of a fact that is widely used in the stabilizer formalism:  given a a stabilizer group $\mathcal{S}\subset \mathcal{P}_n$, every Pauli $P \in \mathcal{P}_n$ can be written as the product of three Pauli operators:
\begin{equation}\label{eq:LSD}
    P = L'SD(\s)
\end{equation}
where $L'$ is a logical operator, $S$ is an element of the stabilizer group, and $D(\s)$ is an element of the destabilizer group with some syndrome $\s$. Given such a Pauli, a correction operator with the same syndrome, $R(\s)$, can be decomposed similarly:

\begin{equation}\label{eq:RLSD}
    R(\s) = L''S'D(\s).
\end{equation}
Using equations \ref{eq:LSD} and \ref{eq:RLSD}, and the self-adjointness of the operators,
\begin{equation}
    P = L'SD(\s) = L'L''SS'R(\s).
\end{equation}
Setting $L = L'L''SS'$, clearly Equation \ref{eq:logical decomp} is true for $P$.
\end{proof}
The other component necessary for the proof is the fact that in the codes and corrections considered, there will always be two corrections that can combine into an uncorrectable error. This is summarized by the following lemma:
\begin{lemma}\label{lemma}
    Consider an $[[n,k,d]]$ stabilizer code where $d\geq 3$ with a stabilizer group $\mathcal{S}$ and a set of corrections $\{R(\s) : 
\s \in \{0,1\}^{\otimes n-k}\} $. Then there exist two correction operators $R(\s_{1})$ and $R(\s_{2})$, where $\s_{1} \neq \s_{2}$ such that
\begin{equation}
    R(\s_{1}) R(\s_{2}) = \mathcal{L}(\s_{1},\s_{2})R(\s_{1} \oplus \s_{2})
\end{equation}
where $\oplus$ is bitwise addition and $\mathcal{L}(\s_{1},\s_{2}) \in N(S)\setminus \mathcal{S}$, where the function, $\mathcal{L}: \{0,1\}^{\otimes n-k} \times \{0,1\}^{\otimes n-k} \rightarrow N(S)$, is defined such that $\mathcal{L}(\s,\s')= R(\s) R(\s')R(\s \oplus \s') $ for all pairs of syndromes $\s$ and $\s'$.
\end{lemma}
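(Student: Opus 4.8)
The plan is to prove the lemma in two stages. First I would verify that $\mathcal{L}(\s_1,\s_2)$ \emph{always} lies in $N(\mathcal{S})$, so that the only thing at issue is whether some pair makes it land outside $\mathcal{S}$. This is immediate: ``$P$ has syndrome $\s$'' is the $\mathbb{Z}_2$-linear condition $Pg_i=(-1)^{s_i}g_iP$, and these sign exponents add under multiplication of Paulis, so, writing $\sigma(P)$ for the syndrome of $P$, one has $\sigma\bigl(R(\s_1)R(\s_2)\bigr)=\s_1\oplus\s_2=\sigma\bigl(R(\s_1\oplus\s_2)\bigr)$; hence $\mathcal{L}(\s_1,\s_2)=R(\s_1)R(\s_2)R(\s_1\oplus\s_2)$ has trivial syndrome and therefore commutes with every $g_\alpha$, i.e.\ lies in $N(\mathcal{S})$. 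This is the routine part.

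The substantive stage is to show $\mathcal{L}(\s_1,\s_2)\notin\mathcal{S}$ for some pair with $\s_1\neq\s_2$, and I would argue by contradiction. Suppose $\mathcal{L}(\s,\s')\in\mathcal{S}$ for every pair. Using Proposition \ref{prop} to split each Pauli $P=L_P\,R(\sigma(P))$ into logical and correction components, a short computation — commuting Paulis past one another and using syndrome additivity once more to identify $\sigma(PQ)=\sigma(P)\oplus\sigma(Q)$ — gives
\begin{equation}
  \Lambda(PQ)=\Lambda(P)+\Lambda(Q)+\bigl[\mathcal{L}(\sigma(P),\sigma(Q))\bigr],
\end{equation}
where $\Lambda(P)$ denotes the coset of $L_P$ in $N(\mathcal{S})$ modulo $\mathcal{S}$ and phases. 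Under the contradiction hypothesis the last term vanishes, so $\Lambda$ is a group homomorphism. Now I would invoke $d\geq 3$: pick a logical operator $\bar L\in N(\mathcal{S})\setminus\mathcal{S}$ of minimal weight $d$ and factor it as $\bar L=Q_1Q_2\cdots Q_d$ with each $Q_j$ a single-qubit Pauli on a distinct qubit. Minimality forces $\sigma(Q_j)\neq\mathbf{0}$ (else deleting $Q_j$ would produce a lower-weight logical), and because $d\geq 3$ each weight-one error $Q_j$ is correctable, so its logical component is trivial: $\Lambda(Q_j)=0$. Feeding $\bar L$ through the homomorphism then gives $\Lambda(\bar L)=\sum_j\Lambda(Q_j)=0$; but $\bar L$ commutes with $\mathcal{S}$, so by Proposition \ref{prop} it \emph{is} (mod $\mathcal{S}$) its own logical component, whence $\Lambda(\bar L)=[\bar L]\neq 0$. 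This contradiction yields a pair with $\mathcal{L}(\s,\s')\notin\mathcal{S}$.

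Finally I would tidy up the requirement $\s_1\neq\s_2$ and the standing hypotheses, which is where the real care lies. A diagonal term is $\mathcal{L}(\s,\s)=R(\s)^2R(\mathbf{0})=\pm R(\mathbf{0})$, whose logical coset is that of $R(\mathbf{0})$ regardless of $\s$; so if $R(\mathbf{0})$ is itself a nontrivial logical — an unusual but formally allowed choice — then $\mathcal{L}(\mathbf{0},\s)=\pm R(\mathbf{0})$ is already nontrivial for any nonzero $\s$ (one exists since $d\geq 3$ forces $n>k$), giving a valid pair directly, and otherwise every diagonal term is trivial so the pair produced above automatically has distinct syndromes. The main obstacle I anticipate is exactly this bookkeeping: correctly tracking the overall $\pm$ (and $\pm i$) phases when passing between $\mathcal{P}_n$ and the quotient $N(\mathcal{S})/\mathcal{S}$, handling degenerate codes with low-weight stabilizers, and making explicit that the chosen corrections actually correct weight-$\leq\lfloor(d-1)/2\rfloor$ errors (which $d\geq 3$ guarantees for the standard minimum-weight choice, and which is what makes $\Lambda(Q_j)=0$). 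The conceptual heart — that triviality of all $\mathcal{L}$'s would force the logical-component map to be a homomorphism, contradicting the existence of a weight-$d$ logical assembled from individually correctable single-qubit errors — is the part I expect to be clean.
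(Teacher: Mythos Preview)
Your proof is correct and takes a genuinely different route from the paper's. The paper argues constructively: it exhibits an uncorrectable Pauli $P^\star$ of weight $\lceil(d+1)/2\rceil$ (such an operator must exist, else the distance would exceed $d$), factors it as $P^\star=P'P''$ with each factor of weight at most $\lfloor(d-1)/2\rfloor$, identifies $P'$ and $P''$ with corrections $R(\s_1)$ and $R(\s_2)$, and then reads off $\mathcal{L}(\s_1,\s_2)\notin\mathcal{S}$ directly from the uncorrectability of $P^\star$. Your contradiction argument instead repackages the hypothesis ``all $\mathcal{L}\in\mathcal{S}$'' as the statement that the logical-component map $\Lambda$ is a group homomorphism on $\mathcal{P}_n$ modulo phases, and then falsifies this by writing a minimal-weight logical as a product of weight-one Paulis, each with trivial $\Lambda$. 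The paper's version is shorter and produces an explicit witnessing pair; yours is more structural, makes the obstruction transparent, and in fact needs only that weight-\emph{one} errors are corrected by the chosen $R(\s)$, whereas the paper's factorization needs errors up to weight $\lfloor(d-1)/2\rfloor$ to be corrections. Both arguments lean on the same unstated hypothesis you flag---that the given corrections actually correct low-weight errors---which the paper dispatches with a ``without loss of generality''; your explicit attention to this point, and your separate handling of the pathological case $R(\mathbf{0})\notin\mathcal{S}$, is well placed.
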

\begin{proof}
    A stabilizer code with distance $d$ corrects all Pauli errors of weight less than $\lfloor{\frac{d-1}{2}}\rfloor$\cite{nielsen2010quantum}. Without loss of generality, the set of corrections can be always chosen to contain this set. Also, there is at least one Pauli operator of weight $\lceil \frac{d+1}{2} \rceil$ that is uncorrectable (or else the code would be a distance $d+1$ code); we denote such an operator as $P^\star$.  For $d\geq3$, the weight of $P^\star$ is also strictly less than $d$.  Note that $P^\star$ can be written as 
\begin{equation}
    P^\star = {P^\star}' {P^\star}'' 
\end{equation}
where both ${P^\star}'$ and ${P^\star}''$ are Pauli operators with weight less than or equal to $\lfloor{\frac{d-1}{2}}\rfloor$.  This means that 
${P^\star}'$ and ${P^\star}''$ are both correctable, meaning they correspond to two correction operators which we can denote $R(\s_{1})$ and $R(\s_{2})$ respectively.   As $P^\star$ has weight $\lceil \frac{d+1}{2} \rceil$, it is \textit{not} correctable, which implies that it does not correspond to a correction operator. Instead, applying Proposition \ref{prop}, $P^\star = L R(\s^\star)$, where $L \in N(S)\setminus \mathcal{S}$. The operator, $L$, cannot be a stabilizer because errors that are stabilizer equivalent to a correction are correctable. Because the syndromes of operators are additive, $\s^\star = \s_{1} \oplus \s_{2}$. This concludes the proof of the lemma.
\end{proof}

Now we can prove Theorem \ref{thm: stab}.
\begin{proof}
    Consider an $[[n,k,d]]$ stabilizer code. Let $\{g_\alpha, \alpha = 1, 2, \dots, n-k\}$ be a set of $n-k$ stabilizer generators for this code. Consider a set of corrections, $\mathfrak{Corrections}= \{R(\s)  : 
\s \in \{0,1\}^{\otimes n-k}\} $, such that for $R(s)\in \mathfrak{Corrections}$, $ R(s) \in  \mathcal{P}_n$, and $R(s)g_i = (-1)^{s_{i}}g_{i}R(s)$ for all $i$. Consider a noisy round of syndrome extraction represented by the map, $\widetilde{\mathcal{R}}$ (see Equation \ref{eq: general noisyrecov}), and $\Pi_{\s}$ (Equation \ref{eq: syn proj}). Let $\Omega$ be the gadget retraction for the code and set of corrections. We will demonstrate Equation \ref{eq: composability violation} by calculating each side of the equality. Calculating $\Omega[\widetilde{\mathcal{R}}]$,

\begin{equation}
    \Omega[\widetilde{\mathcal{R}}] \equiv 
    \mathcal{D}\circ \widetilde{\mathcal{R}} \circ \mathcal{E} = \mathcal{I}_k ,
\end{equation}
where, $\mathcal{D}= \mathcal{E}^{\dagger}\circ \mathcal{R}$ and $\mathcal{I}_k$ is the identity super-operator over $k$ qubits. This is because the output of $\mathcal{E}$ is always in the codespace and $\widetilde{\mathcal{R}}$ will only apply correctable errors to the resulting state, which will get corrected by $\mathcal{R}$, hence the logical qubit is unaffected. Now we calculate $\Omega[\widetilde{\mathcal{R}}\circ\widetilde{\mathcal{R}}]$,
\begin{widetext}
\begin{align} 
    \Omega[\widetilde{\mathcal{R}}\circ\widetilde{\mathcal{R}}] &= \mathcal{D}(\sum_{\s\s'\s''\s'''} C_{\s''' \s''} C_{\s' \s} R(\s''')\Pi_{\s''}R(\s')\Pi_{\s} \mathcal{E}(\cdot)\Pi_{\s}R(\s')\Pi_{\s''}R(\s'''))\\
    &= \mathcal{D}(\sum_{\s' \s'''} C_{\s''' \s'} C_{\s' \vec{0}} R(\s''')R(\s') \mathcal{E}(\cdot)R(\s')R(\s'''))\\
    &= \mathcal{D}(\sum_{\s' \s'''} C_{\s''' \s'} C_{\s' \vec{0}} \mathcal{L}(\s''', \s')R(\s'''\oplus \s') \mathcal{E}(\cdot)R(\s'''\oplus \s')\mathcal{L}(\s''', \s'))\\
    &=\sum_{\s' \s'''} C_{\s''' \s'} C_{\s' \vec{0}} P^{\mathcal{L}(\s''', \s')}(\cdot)P^{\mathcal{L}(\s''', \s')},
\end{align}
\end{widetext}
where $\mathcal{L}(\s''', \s'))$ is the logical component of $R(\s')R(\s''')$ per Proposition \ref{prop}, and $P^{\mathcal{L}(\s''', \s')}$ is the $k$ qubit Pauli that corresponds to $\mathcal{L}(\s''', \s')$. The second line derives from the fact that the output of $\mathcal{E}$ is within the codespace, and 
\begin{equation}
    \Pi_{\vec{0}}R^{\dagger}(\s')R(\s)\Pi_{\vec{0}}= \delta_{\s\s'}\Pi_{\vec{0}} \quad \forall \s,\s'\in \{0,1\}^{\otimes n-k}.
\end{equation}
The last line follows from the fact that the output of $\mathcal{E}$ is always in the codespace, and so the correction components, $R(\s'''\oplus \s')$, will be corrected by the unencoding operation, $\mathcal{D}$. By Lemma \ref{lemma}, one of the $P^{\mathcal{L}(\s''', \s')}$ in the sum will not be the identity, and so the effective logical process will be a Pauli channel not equal to the identity. Therefore,
\begin{equation}
    \Omega[\widetilde{\mathcal{R}} \circ \widetilde{\mathcal{R}}]\neq \mathcal{I}_k = \Omega[\widetilde{\mathcal{R}}] \circ \Omega[\widetilde{\mathcal{R}}].
\end{equation}
\end{proof}

\section{An open systems picture of logical non-Markovianity}\label{sec: causes}
We have found it useful to factor the data qubits that comprise a logical qubit into two subsystems, logical qubits and syndrome qubits.  Because they are coupled by syndrome extraction, the dynamics of the logical qubit[s] are the consequence of a Markov process on the \textit{combined} system, and thus (in general) described by a hidden Markov process. We can describe this in at least two complementary paradigms, and in this section we do just that.  First, we describe syndrome extraction as a Markov chain using stochastic matrices. This paradigm is useful for stabilizer codes with Pauli noise models, initialized into a computational basis state and subsequently measured in the computational basis.  Second, we describe the logical qubits as an open quantum system.  This paradigm can be used for stabilizer codes with more general noise models.  These correspondences allow insights about non-Markovian stochastic processes and quantum non-Markovianity in open quantum systems \cite{li_concepts_2018, milz_quantum_2021} to be applied to logical qubits.
\newline
\subsection{A stochastic picture}

The model we consider in Section \ref{sec: example} can be modeled completely classically as a Markov chain, with the logical qubit dynamics being derivable from the resulting stochastic process. To define a Markov chain, we will need to define the state space of the process, and the transition matrix, which determines how the process evolves from timestep to timestep. Because we initialize in either $\ket{\Bar{0}}$ or $\ket{\Bar{1}}$ of the 3-qubit repetition code and suffer only probabilistic bitflips, the state space of the Markov chain is the set of 3-qubit computational basis states. Each timestep corresponds to a noisy round of error correction and so can be represented by the map $\widetilde{\mathcal{R}}$. The transition matrix of the process,  $\widetilde{\mathcal{R}}_{M}$, corresponds to the matrix of transition probabilities between computational basis states, the matrix elements given by

\begin{equation}
    (\widetilde{\mathcal{R}}_{M})_{(ijk),(qrs)} = Tr(\dyad{ijk}\widetilde{\mathcal{R}}(\dyad{qrs})).
\end{equation}
The full matrix is
\begin{widetext}
\begin{equation}
    \widetilde{\mathcal{R}}_{M} = \ \begin{blockarray}{ccccccccc}
000& 111 & 001 & 110 & 100& 011 & 010 & 101  \\
\begin{block}{(cccccccc)c}
  (1-p)^2 & 0 & (1-p)^2 & 0 & (1-p)^2 & 0 & (1-p)^2 & 0 & \ 000 \\
  0 & (1-p)^2 & 0 & (1-p)^2 & 0 & (1-p)^2 & 0 & (1-p)^2& \ 111 \\
  p(1-p) & 0 & p(1-p) & 0 & 0 & p(1-p) & 0 & p(1-p) &\ 001 \\
  0 & p(1-p) & 0 & p(1-p) & p(1-p) & 0 & p(1-p) & 0 & \ 110 \\
  p(1-p)& 0 & 0 & p(1-p) & p(1-p) & 0 & 0 & p(1-p) & \ 100 \\
  0 & p(1-p) & p(1-p) & 0 & 0 & p(1-p) & p(1-p) & 0 & \ 011 \\
  p^2 & 0 & 0 & p^2 & 0 & p^2 & p^2 & 0 &\ 010 \\
  0 & p^2 & p^2 & 0 & p^2 & 0 & 0 & p^2 & \ 101 \\
\end{block}
\end{blockarray}.
\end{equation}
\end{widetext}
The columns and rows are labeled by their corresponding computational basis states. The labels are further ordered by which syndrome space they belong to (e.g. $000$ and $111$ are in the codespace, whereas $001$ and $110$ are in the $s=(0,1)$ space)  to highlight the syndrome dependence of the process. Having written down the transition matrix for the the average of the noisy error correction round, one can also study the effect that syndrome errors have on the process by decomposing  $\widetilde{\mathcal{R}}$ into maps conditioned on which ancilla errors occur. Let $\widetilde{\mathcal{R}}_{e_{1} e_{2}}$ be the recovery map where an error occurs on one or more ancilla qubits during the syndrome extraction process, where $e_i = 1$ if a bitflip occurred on ancilla $i$ and $e_{i}=0$ otherwise. (Note that $\widetilde{\mathcal{R}}_{00} = \mathcal{R}$). Using the definition of $\widetilde{R}$ from Eq. \ref{eq: noisyrecov},
\begin{equation}
    \widetilde{\mathcal{R}}_{e_1 e_2}(\rho) = \sum_{\s \oplus \s' = (e_{1}, e_{2})} \chi_{\s, \s'} R(\s') \Pi_{\s} \rho \Pi_{\s} R(\s'),
\end{equation}
where $\oplus$ is bitwise addition and $\widetilde{\mathcal{R}}(\rho) = (1-p)^2 \mathcal{R}(\rho)+ p(1-p)(\widetilde{\mathcal{R}}_{10}(\rho) + \widetilde{\mathcal{R}}_{01}(\rho)) + p^2 \widetilde{\mathcal{R}}_{11}(\rho)$. The transition matrix can be decomposed in a similar way. 

This Markov chain description can be used to derive logical dynamics.  Each computational basis state of the data qubits (e.g., $\ket{000}$) corresponds to a joint state of (1) the logical qubit and (2) the syndrome qubits. When the data qubits' state is in the codespace (e.g.~$\ket{\Bar{0}}$ or $\ket{\Bar{1}}$), the state assigned to the logical qubit is obvious (e.g.~$\ket{0}$ or $\ket{1}$, respectively).  For states \textit{outside} the codespace, the state assigned to the logical qubit is determined by the decoding map. So the logical qubit is jointly defied by both the code \textit{and} the decoder.  The decoder defined in Eq. \ref{eq: rep corrections} associates, e.g., $\ket{101}$ with logical $\ket{\Bar{1}}$, etc. Finally, to any computational basis state of the data qubits, we associate the syndrome qubit state corresponding to the outcomes of perfect syndrome extraction.  So, e.g., $\ket{101}$ is associated with the syndrome qubit state of $\ket{11}$ because both code stabilizer generators have eigenvalue $-1$.
We now have a description of a Markov chain and its effect on logical and syndrome qubit information. All of this information can visualized on a cube graph, as shown in Fig.~\ref{fig:cube}:

\begin{figure}[H]
    \centering
    \includegraphics[width=0.8\linewidth]{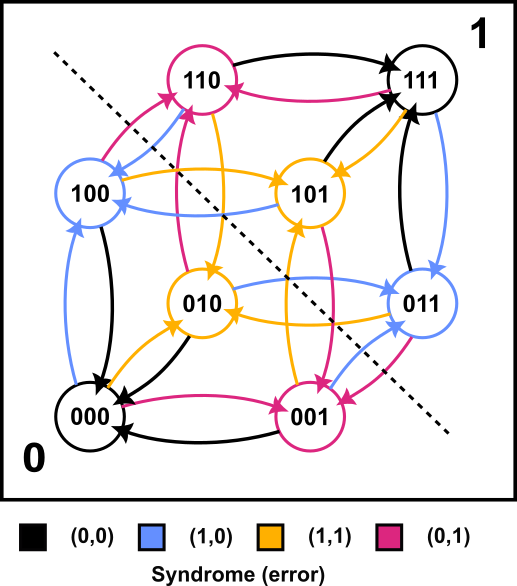}
    \caption{A graph representing the action of 1 noisy QEC round consisting of syndrome extraction and correction in the 3-qubit code. The nodes correspond to computational basis states and the arrows correspond to the action of the noisy recovery map on the states given a particular error, each of which is denoted by a color (e.g. blue corresponds to a bitflip on the first ancilla qubit during syndrome extraction, the blue arrows representing the map $\widetilde{\mathcal{R}}_{10}$). In the absence of syndrome extraction error (the black arrows), the action of the QEC round is to ``cool" the system to the codespace, but syndrome errors will cause the experimenter to confuse one syndrome space for another. A QEC round with syndrome extraction error $(i,j)$ will cool the system to the subspace with syndrome $(i,j)$ via the map $\widetilde{\mathcal{R}}_{ij}$. This leads to logical transition depending on the syndrome of the input state.}
    \label{fig:cube}
\end{figure}
Here, the transitions not displayed map the states to themselves (i.e. $\mathcal{R}(\dyad{000})= \dyad{000} $, and $\widetilde{\mathcal{R}}_{11} (\dyad{101}) =\dyad{101}$).
 This graph can be turned into a Markov chain diagram by labeling each arrow with transition probabilities corresponding to the probability of the corresponding syndrome error (i.e. $p^2$ for the orange arrows which correspond to a syndrome error on both bits). We can partition the physical states into the four states that decode to ``$0$'' (bottom left), and the four that decode to ``$1$'' (top right). Transitions between these sets of states are logical transitions. We see that the same map can cause different logical transitions depending on the value of the syndrome. For example, when $\mathcal{R}_{11}$ is applied to a state with a trivial syndrome or the syndrome $(1,1)$, no logical transitions occur, but when applied to states with the syndromes $(0,1)$ or $(1,0)$ it leads to a logical transition. This means that this stochastic process describes \textit{correlated} dynamics between syndrome and logical bits, which explains why non-Markovianity is observed (as demonstrated in the previous section).

\begin{figure}
    \centering
    \includegraphics[width=1\linewidth]{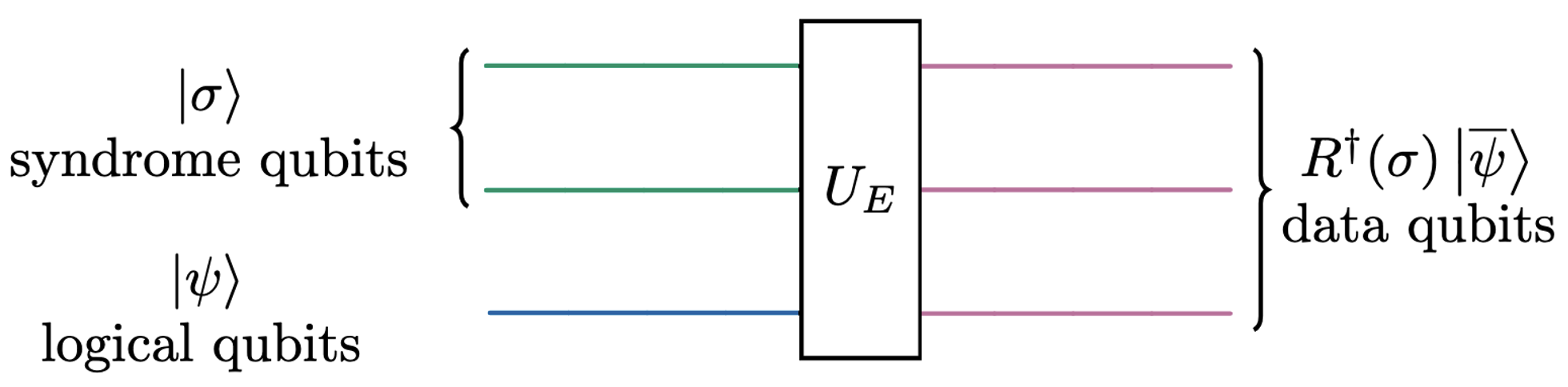}
    \caption{The encoding unitary for the 3-qubit repetition code. The encoding unitary, denoted $U_E$ relates states of the subsystems of syndrome qubits and logical qubits to an encoded state of data qubits. The green, blue, and purple wires correspond to the syndrome qubits, logical qubits and data qubits respectively. When the syndrome qubits are in the all-zeroes state, the corresponding data qubit states are in the codespace, (i.e. for the 3-qubit repetition code, and an arbitrary logical qubit state, $\ket{\psi}$, $U_E\ket{00}\otimes\ket{\psi}=\ket{\Bar{\psi}}$. For a non-trivial syndrome $\s$, $U_E$ maps the syndrome qubit state $\ket{\s}$ to a subspace determined by a choice of Pauli correction, $R(\s)$, such that for an arbitrary logical qubit state $\ket{\psi}$, $U_E\ket{\s}\otimes\ket{\psi}=R^{\dagger}(\s)\ket{\Bar{\psi}}$ (see Table \ref{tab:encunitary}).}
    \label{fig:repunitary}
\end{figure}
\subsection{Logical qubits as an open quantum system}\label{sec: logical oqs}

Modeling logical dynamics as a stochastic process, as in the previous section, is suitable only for stochastic noise, but we expect data qubits to be subject to more general physical error processes.  These general error processes constitute open quantum system dynamics on the logical qubits. For an $[[n,k]]$ code, if $\mathcal{H}_{data}$ is the $2^n$-dimensional space of the data qubits, and $\mathcal{H}_L$ is the $2^k$-dimensional space of logical qubits, then $\mathcal{H}_{data}$ is isomorphic to $ \mathcal{H}_{Syn}\otimes\mathcal{H}_L$, where $\mathcal{H}_{Syn}$ contains the syndrome information, the information necessary to detect errors. We call the frame where logical qubits and syndrome qubits are localized to their own individual subsystems the \emph{logical frame}. We call this isomorphism the \emph{encoding unitary}, and it is determined by both the code and choice of corrections. 

Let $\s \in \{0,1\}^{\otimes n-k}$, be an arbitrary syndrome and $\ket{\psi}$ be an arbitrary $k$-qubit computational basis state in $\mathcal{H}_L$. The encoding unitary can be defined by its action on the computational basis states of $ \mathcal{H}_{Syn}\otimes\mathcal{H}_L$ (see Table \ref{tab:encunitary} for the 3-qubit repetition code). If $R(\s)$ is the correction that corresponds to the syndrome $\s$ (where we also assume $R(\s)$ has the syndrome $s$),   then the action of the encoding unitary, $U_E$, on $\ket{\s}\otimes\ket{\psi}$ is:

\begin{align}
\begin{split}
     U_E: &\mathcal{H}_{Syn}\otimes \mathcal{H}_L \rightarrow \mathcal{H}_{data}\\
        &\ket{\s}\otimes\ket{\psi} \mapsto R^{\dagger}(\s)E \ket{\psi}.\label{eq: enc unitary body}
\end{split}
\end{align}

We prove that this is a valid unitary map in Appendix \ref{appendix: encoding unitary}. Fig \ref{fig:repunitary} illustrates this for the $[3,1,3]$ code, whose corrections are given in Equation \ref{eq: rep corrections}. This definition has two important properties that are conceptually and operationally motivated: (1) When the syndrome qubits are in the state $\s$ the encoding unitary maps the system to a data qubit state with syndrome $\s$. For example, if the syndrome state is the all-zeros state the encoding unitary will map logical qubit states to the codespace. This condition alone does not single out the above definition. For this we need condition (2): the outcome of an ideal logical measurement gadget is only determined by $\mathcal{H}_L$.  An ideal logical measurement gadget can be represented by the POVM $\Bar{M}:= \{ \mathcal{D}^{\dagger}(M_{i}) : i = 1, \dots , n_{m}\} $ where $\mathcal{D}$ is the unencoding operation defined using the same set of corrections as $U_E$ and  $M= \{ M_i : i = 1, \dots , n_{m}\}$ is a POVM defined over $\mathcal{B}(\mathcal{H}_{L})$. One can demonstrate that the definition of $U_E$ implies,
\begin{equation}
    Tr_{Syn}(U_{E}^{\dagger} \rho U_{E}) =\mathcal{D}(\rho),
\end{equation}
where $Tr_{Syn}$ is the partial trace over $\mathcal{H}_{Syn}$.  This means that measuring $\Bar{M}$ over the data qubit space is equivalent to measuring $M$ over the logical qubit space. The two conditions also imply that the gadget retraction, $\Omega$ corresponds to the effect that a process has on the logical qubits if the initial state is within the codespace.

\begin{figure*}
    \centering
    \includegraphics[width=0.85\linewidth]{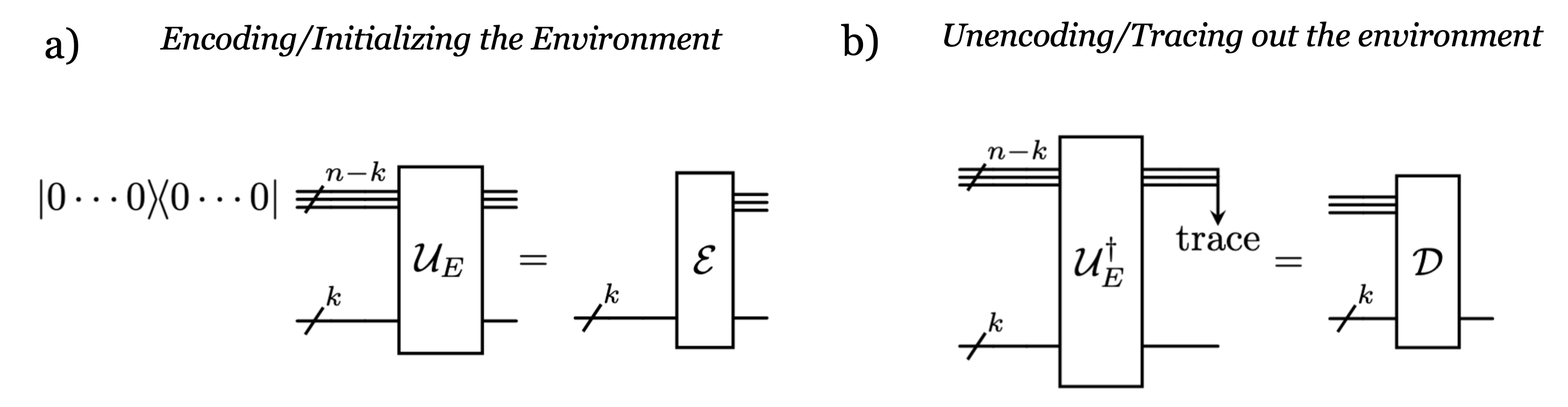}
    \caption{Encoding and decoding in the logical frame and the data qubit frame. Using the definition of the encoding unitary, $U_E$ from Fig \ref{fig:repunitary} and Equation \ref{eq: enc unitary body}, the corresponding superoperator is  $\mathcal{U}_{E}(\cdot) := U_{E}(\cdot)U_{E}^{\dagger}$. a) Applying the encoding operation to a logical qubit state is the same as appending an environment of syndrome qubits in the all zero state in the logical frame. b) Unencoding is the same as tracing out the syndrome qubit environment in the logical frame. }
    \label{fig:encoding}
\end{figure*}

For a process $\Lambda \in \mathcal{B}(\mathcal{B}(\mathcal{H}_{data}))$,

\begin{equation}
    \Omega[\Lambda](\rho) = Tr_{Syn}(U_{E}^{\dagger} \Lambda( U_{E}(\rho\otimes\ket{\vec{0}}\bra{\vec{0}})U_{E}^{\dagger} )U_{E}).
\end{equation}
Another consequence of this definition is that an ideal round of QEC corresponds to resetting the syndrome qubits to the all zero state. With this isomorphism established, at the cost of computing $U_E$, one can port Markovianity conditions from open quantum systems literature to understand logical qubit dynamics. Non-Markovianity of logical qubits happens for the same reason it does for qubits in open quantum systems: they are coupled to an environment ($\mathcal{H}_{Syn}$) that carries memory of the process on the logical qubits and feeds back on the system \cite{li_concepts_2018}.

There are some caveats to this factorization picture of logical processes. The logical frame depends on the ability to perform a perfect state preparation in the codespace and a perfect final measurement gadget. Perfect logical state preparations are analogous to the assumption, in open quantum systems, that the system is initially prepared in a state that is uncorrelated from the environment, (see Appendix \ref{appendix:logical dynamics} and \ref{sec: sufcond} for more details). It is known that if process tomography is performed when initial correlations are present between the system and environment, then there are situations where the dynamics must either be non-linear or non-CP \cite{pechukas1994reduced,alicki1995comment,pechukas1995pechukas, modi2010role} (see the authors' discussion of this in \cite{milz_quantum_2021} ). Another consequence of the argument used in Theorem \ref{thm: stab} is that logical state preparation errors are likely to cause correlations between the logical and syndrome qubits, and so there is the possibility that non-Markovian effects could arise from the SPAM error.

\section{Discussion}

The preceding analysis and examples clearly demonstrate that (and how) Markovian physical errors can give rise to non-Markovian logical errors in non-fault-tolerant syndrome extraction circuits. To do this we have defined an operationally inspired ``button-theoretic" definition of Markovianity, where every time a button with label ``gate $a$" is pressed the resulting process is described by the same CPTP map defined over a the logical qubit space. For $d=3$ codes with no SPAM errors, we found that the logical error rate is initially 0 for 1 round, then jumps to a positive value, and then relaxes rapidly toward an asymptotically constant error rate.  Non-exponential decay of polarization indicates non-Markovianity, but the detailed analysis suggest that non-Markovianity is strongest early on, and disappears rapidly as multiple rounds of error correction are performed.

The source of the observed non-Markovianity is the tendency for imperfect syndrome extraction to leave the data qubits outside the codespace, where the encoded information is less protected from error. Indexing these ``syndrome spaces'' by their syndromes produces a simple picture in which the data qubits' joint Hilbert space factors into logical-qubit and syndrome subsystems, and the syndrome subsystem contains untracked classical information that can become correlated with -- and thus induce non-Markovian dynamics within -- the logical qubit[s].  So logical qubit[s] can be associated with a factor space (whose complement is comprised of syndrome qubits), where the factorization is determined by a complete set of correction operators. The logical qubit[s] are noiseless subsystems of $\mathcal{N}\circ\mathcal{R}$ where $\mathcal{R}$ is the recovery channel determined by the set of corrections, and $\mathcal{N}$ is any channel whose Kraus operators are proportional to the corrections \cite{blume2010information}. Processes on the data qubits can be seen as coupling the logical qubit[s] to the syndrome qubits, and thus defining a hidden Markov model for the non-Markovian dynamics of the logical qubit[s]. This opens up the opportunity to apply Markovianity conditions for open quantum systems to logical processes (such as Markovianity conditions analogous to those in \cite{li_concepts_2018,milz_quantum_2021}).

This work leads to many more questions that could be examined in follow up works. 
For example, it will be valuable to relax the artificial assumption that there are no SPAM errors, and thus to study the Markovianity of logical dynamics where state preparation and measurement gadgets are also noisy. An even more important question is the impact of explicit fault-tolerant circuit design.  Because non-Markovianity does not occur in the absence of syndrome extraction errors, we expect fault-tolerant syndrome extraction to reduce (but not eliminate) non-Markovianity. 

Our results suggest that logical non-Markovianity should be nearly ubiquitous, albeit perhaps very small in magnitude.  Proving bounds on logical non-Markovianity for reasonable noise models and codes would delineate the circumstances under which logical processes are \textit{approximately} Markovian.  This would justify the use of QCVV protocols that assume Markovianity, such as logical RB and logical GST.  Some initial work in this direction appears in Ref.~\cite{kwiatkowski2025constructing}, where a bound on non-Markovianity is studied in repeated syndrome extraction under Pauli noise models.

One question of concern is whether the reduction of syndrome errors will cause a reduction of non-Markovianity that is on par with the reduction of the total logical error. One can imagine that an approximately Markovian logical process might also be approximately noiseless, so distinguishing the process from a Markovian evolution may be equally as difficult as distinguishing the process from ones without errors. The logical error rate may fluctuate by the same proportion, but because it is smaller in magnitude the process would be considered approximately Markovian by a non-Markovian metric based on operational distinguishability of processes (e.g. Schatten measures \cite{milz_quantum_2021}). This suggests that reducing syndrome errors might make standard QCVV techniques more reliable for logical processes.

\begin{acknowledgements}
    We would like to thank Jordan Hines, Alex Kwiatkowski, Andrew Landahl, Cole Maurer,  Mohan Sarovar, and Kevin Young for fruitful technical discussions.  Sandia National Laboratories is a multimission laboratory managed and operated by National Technology \& Engineering Solutions of Sandia, LLC, a wholly owned subsidiary of Honeywell International Inc., for the U.S. Department of Energy’s National Nuclear Security Administration under contract DE-NA0003525.  This paper describes objective technical results and analysis.  The research is based upon work supported in part by the Office of the Director of National Intelligence (ODNI), Intelligence Advanced Research Projects Activity (IARPA), specifically the ELQ program. Any subjective views or opinions that might be expressed in the paper do not necessarily represent the views, official policies or endorsements, either expressed or implied, of the ODNI, IARPA, the U.S. Department of Energy, or the U.S. Government. The U.S. Government is authorized to reproduce and distribute reprints for Governmental purposes notwithstanding any copyright annotation thereon.
\end{acknowledgements}

\appendix
\section{Effective logical dynamics}\label{appendix:logical dynamics}
In this section, we will define all the mathematics necessary to derive the dynamics of logical qubits from their underlying physical processes. To do this we will borrow some of the ideas from \cite{rahn_exact_2002, beale_efficiently_2021} with some slight differences in notation and scope. Consider an $[[n,k]]$ code that encodes $k$ logical qubits (whose Hilbert space we will denote $\mathcal{H}^k$) into $n$ physical or data qubits (whose Hilbert space we will denote $\mathcal{H}^n$). An implementation of an error-corrected logical process using this code is characterized by two things: an encoding operation $\mathcal{E}: \mathcal{B}(\mathcal{H}^k) \rightarrow \mathcal{B}(\mathcal{H}^n)$, and a set of corrections, $\mathfrak{Corrections}$, which will satisfy certain conditions. The set of corrections allows us to define an unencoding operation, $\mathcal{D}: \mathcal{B}(\mathcal{H}^n) \rightarrow \mathcal{B}(\mathcal{H}^k)$, which maps physical states to logical qubit states. Finally, we will define the logical gadget retraction $\Omega$ in terms of these operations. 
\subsubsection{The encoding operation}
The map $\mathcal{E} $ is defined via a linear map $E:\mathcal{H}^k \rightarrow \mathcal{H}^n$
\begin{equation}
    \mathcal{E}(\cdot) := E (\cdot)E^\dagger.
\end{equation}
The operator $E$ is called an encoding isometry, because it must map a basis of logical qubit states to a basis in the codespace in a one-to-one fashion. Isometry implies that it preserves the value of the Hilbert space inner product. Let $\mathcal{C}$ be the codespace of the code, then $Image(E)=\mathcal{C}$ and:

\begin{equation}
    \braket{\psi}{\phi}= \bra{ \psi}E^{\dagger}E\ket{ \phi} \quad \forall \ket{\psi},\ket{\phi} \in \mathcal{H}^{k}.
\end{equation}

Equivalently, $E^{\dagger} E = I_k$, the identity in $\mathcal{B}(\mathcal{H}^{k})$. The operator   $E E^{\dagger}$ is a projector onto the codespace. The encoding isometry of a code is not unique, but all choices for $E$ are equivalent up to isomorphism (unitaries before and after application of $E$). One example of an encoding isometry in the repetition code acts on the computational basis states in the following way:
\begin{equation}
    \ket{0}\mapsto \ket{000},
\end{equation}
\begin{equation}
    \ket{1}\mapsto \ket{111}.
\end{equation}
In general we denote $E\ket{\psi}$ as $\ket{\Bar{\psi}}$. 
\subsubsection{The set of corrections and unencoding}
The set of corrections, $\mathfrak{Corrections}$, represents the decoding strategy of the code, of which there are multiple choices. The main restriction on the set is that for all $R$
\begin{equation}
    \Pi_R = R E E^{\dagger} R^{\dagger}
\end{equation}

is a projector, meaning $\forall R, R' \in \mathfrak{Corrections}$ $\Pi_R \Pi_{R'}= \delta_{R,R'} \Pi_R$. Additionally, the set of projectors should be complete:
\begin{equation}
    \sum_{R \in \mathfrak{Corrections}} \Pi_R = I_n
\end{equation}

The corrections should also correspond to likely errors. This immediately suggests a strategy for error correction. First, measure which cospace the state is in, and apply the corresponding correction. When the measurement outcomes are averaged over, this process corresponds to the recovery map, $\mathcal{R}$:
\begin{equation}\label{eq: recov map}
    \mathcal{R}(\cdot) = \sum_{R \in \mathfrak{Corrections}} R \Pi_R (\cdot) \Pi_R R^{\dagger}
\end{equation}
The recovery map also allows us to relate general physical qubits states to their corresponding logical qubits via the encoding operation,
\begin{equation}
    \mathcal{D} = \mathcal{E}^{\dagger}\circ\mathcal{R}
\end{equation}
where, $\mathcal{E}^{\dagger} (\cdot) = E^\dagger (\cdot) E$. The unencoding operation associates a physical qubit state with the logical information it has upon ideal decoding. This is partially justified by how logical qubits would be measured in practice: in a fault tolerant scheme any logical measurement would be performed by a measurement gadget that performs error-correction during the measurement process. Application of $\mathcal{D}$ followed by a measurement of the logical qubit will give the same measurement outcomes as an ideal implementation of such a measurement gadget once the cospace measurement outcomes are averaged over, as in Eq. \ref{eq: recov map}.

Now that we can represent encoding and encoding, we can represent the effect that a general physical process has on the logical qubits. Given a choice of encoding $\mathcal{E}$ and decoding via a set of corrections, $\mathfrak{Corrections}$, we define the gadget retraction,
\begin{equation}
    \Omega[\mathcal{N}]:=  \mathcal{D} \circ \mathcal{N} \circ \mathcal{E}
\end{equation}

where $\mathcal{N}$ is a CPTP map in $\mathcal{B}(\mathcal{H}^{n})$. If $\mathcal{N}$ represents a process on the physical qubits then, the gadget retraction maps $\mathcal{N}$ to it's effective logical process $\Omega[\mathcal{N}]$.

\subsection{The encoding unitary}\label{appendix: encoding unitary}
As discussed in section \ref{sec: logical oqs}, for stabilizer codes, processes on physical qubits can be thought of as processes on the logical qubits and an environment of syndrome qubits. The change in frame is mathematically represented by an \emph{encoding unitary}. The purpose of this section is to mathematically define the encoding unitary.  Any encoding unitary, $U_E$, should map the trivial syndrome to the codespace and so, should have the property

\begin{equation}
    U_E(\ket{0}^{n-k} \otimes \ket{\psi}) =  \ket{\Bar{\psi}}\equiv E \ket{\psi}.\label{eq:isometric}
\end{equation}

 To define $U_E$ it suffices to define a bijection between orthonormal bases of $\mathcal{H}_n$ and $\mathcal{H}_{Syn}\otimes \mathcal{H}_L$. Because $U_E$ is a linear operator, we can define it implicitly by showing how the unitary transforms computational basis states. Let $\s \in \{0,1\}^{\otimes n-k}$, be an arbitrary syndrome and $\ket{\psi}$ be an arbitrary $k$ qubit computational basis state in $\mathcal{H}_L$ then the action of $U_E$ on $\ket{\s}\otimes\ket{\psi}$ is:

\begin{align}
\begin{split}
     U_E: &\mathcal{H}_{Syn}\otimes \mathcal{H}_L \rightarrow \mathcal{H}_n\\
        &\ket{\s}\otimes\ket{\psi} \mapsto R^{\dagger}(\s)E \ket{\psi}.\label{eq:unitary}
\end{split}
\end{align}

As in the previous section, we are assuming each $\s$ has a corresponding unitary correction $R(\s)$, i.e. that the code is a perfect code, and the set of corrections, $\{R(\s) : 
\s \in \{0,1\}^{\otimes n-k}\} $ satisfy the condition:
\begin{equation}
    \Pi_{\vec{0}}R^{\dagger}(\s')R(\s)\Pi_{\vec{0}}= \delta_{\s\s'}\Pi_{\vec{0}} \quad \forall \s,\s'\in \{0,1\}^{\otimes n-k}.
\end{equation}

To make this construction general one also has to consider degenerate codes, where certain syndromes are only used to detect errors and not for error correction. To make $U_E$ a valid unitary on the entire space, for each syndrome pick some Pauli consistent with the syndrome outcome and treat that as a ``correction". This allows for a consistent definition of the logical qubits while still allowing for a description of error detection and post-selection. To prove that this definition of $U_E$ is in fact unitary we need to demonstrate that it maps a complete orthonormal states in one space to a complete orthonormal states in the other. 
\begin{proof}
    Consider two computational basis states in $\mathcal{H}_{Syn}\otimes \mathcal{H}_L$, $\ket{\s}\otimes\ket{i}$ and $\ket{\s'}\otimes\ket{j}$.  Note that $\Pi_{\vec{0}} E \ket{\phi} = E \ket{\phi}  \forall \ket{\phi} \in \mathcal{H}_L $. The inner product after transforming by $U_E$ is

    \begin{gather}
        (\bra{\s}\otimes\bra{i}) U_{E}^{\dagger} U_E (\ket{\s'}\otimes\ket{j})=\bra{i}E^{\dagger}R^{\dagger}(\s)R(\s')E\ket{j}\\
                                                    = \bra{i}E^{\dagger}\Pi_{\vec{0}}R^{\dagger}(\s)R(\s')\Pi_{\vec{0}}E\ket{j} \\
                                                      = \bra{i}E^{\dagger} \delta_{\s\s'}\Pi_{\vec{0}}E\ket{j}\\ 
                                                      = \delta_{\s\s'}\bra{i}E^{\dagger} E\ket{j}\\
                                                      = \delta_{\s\s'}\delta_{ij}
    \end{gather}
    So $U_E$ maps a complete orthonormal basis of $\mathcal{H}_{Syn}\otimes \mathcal{H}_L$ to an orthonormal basis in $\mathcal{H}_n$, and because the dimensions of the two spaces are the same, $U_E$ must be unitary.
\end{proof}

 Note, this unitary is similar but not identical with the isomorphism given in \cite{knill_theory_1997, knill_theory_2000}. The authors in \cite{carrozza2024correspondence} also prove that there is a general one-to-one correspondence between tensor factorizations of Hilbert space and maximal correctable error sets, and relate QECs to the theory of reference frames \cite{bartlett_reference_2007}.

A simple example of the encoding unitary is in the 3-qubit repetition ($[3,1,3]$) code. Here, $\mathcal{S}= \bigl \langle ZZI, IZZ \bigr \rangle$, $\mathcal{C} = 
\text{Span}(\{\ket{000}, \ket{111}\})$, and the code can correct any single bit flip error.

\begin{table}[]\label{table: rep code}
    \centering
    \begin{tabular}{|c | c | c|} 
        \hline
        Logical state  & Syndrome & Encoded State \\
        $\ell$&$\s$&$U_E(\ket{\s}\otimes \ket{\ell})$\\
        \hline
        \hline
        $0$&$00$ & $\ket{000}$\\
        \hline
        $0$& $01$& $\ket{001}$\\
        \hline
        $0$& $10$& $\ket{100}$\\
        \hline
        $0$& $11$& $\ket{010}$\\
        \hline
        $1$&$00$ & $\ket{111}$\\
        \hline
        $1$&$01$ & $\ket{110}$\\
        \hline
        $1$& $10$& $\ket{011}$\\
        \hline
        $1$&$11$ & $\ket{101}$\\
        \hline
    \end{tabular}
    \caption{A table of the encoding unitary for the 3-qubit repetition code. Shown is how it maps the 3-qubit computational basis states to the basis states of logical and syndrome qubits. }
    \label{tab:encunitary}
\end{table}

Note that the encoding unitary, as defined in Eq. \ref{eq:unitary}, is not the only unitary with the property in Eq. \ref{eq:isometric}, but I would argue that this is the most natural definition of the encoding unitary, because it corresponds closely to how logical qubits are measured in on a real quantum computer. In practice, when ever the logical qubits are measured there is always some error-correction that occurs. A consequence of this definition is
\begin{equation}
    Tr_{Syn}(U_{E}^{\dagger} \rho U_{E}) = E^{\dagger}(\mathcal{R}(\rho))E =\mathcal{E}^{\dagger}(\mathcal{R}(\rho)).
\end{equation}
This means that in the absence of readout errors, fault-tolerant measurement of the logical operators corresponds to direct measurement of the logical qubits. If the syndrome qubits are also initialized in the all zero state, then a process $\Lambda$ is applied to the physical qubits in the physical frame, then the resulting logical process is given by the gadget retraction:

\begin{equation}
    \Omega[\Lambda](\rho_L) = Tr_{Syn}(U_{E}^{\dagger} \Lambda( U_{E}(\rho_{L}\otimes\ket{\vec{0}}\bra{\vec{0}})U_{E}^{\dagger} )U_{E}).
\end{equation}
One can generalize this map when the syndrome qubits are initialized in a different state, $\rho_{Syn}\in \mathcal{B}(\mathcal{H}^{n-k})$. The generalized gadget retraction is given by:
\begin{equation}
    \Omega_{\rho_{Syn}}[\Lambda](\rho_{L}) := Tr_{Syn}(U_{E}^{\dagger} \Lambda( U_{E}(\rho_{L}\otimes\rho_{Syn})U_{E}^{\dagger} )U_{E}).
\end{equation}
This can also be written as,
\begin{equation}
    \Omega_{\rho_{Syn}}[\Lambda](\rho_{L}) := \mathcal{D}\circ\Lambda\circ\mathcal{E}_{\rho_{Syn}}(\rho_{L}),
\end{equation}
where 

\begin{equation}\label{eq:gen enc op}
    \mathcal{E}_{\rho_{Syn}}(\cdot):= U_{E}((\cdot)\otimes\rho_{Syn})U_{E}^{\dagger}
\end{equation}
is the generalized encoding operation. 
\begin{figure}
    \centering
    \includegraphics[width=0.75\linewidth]{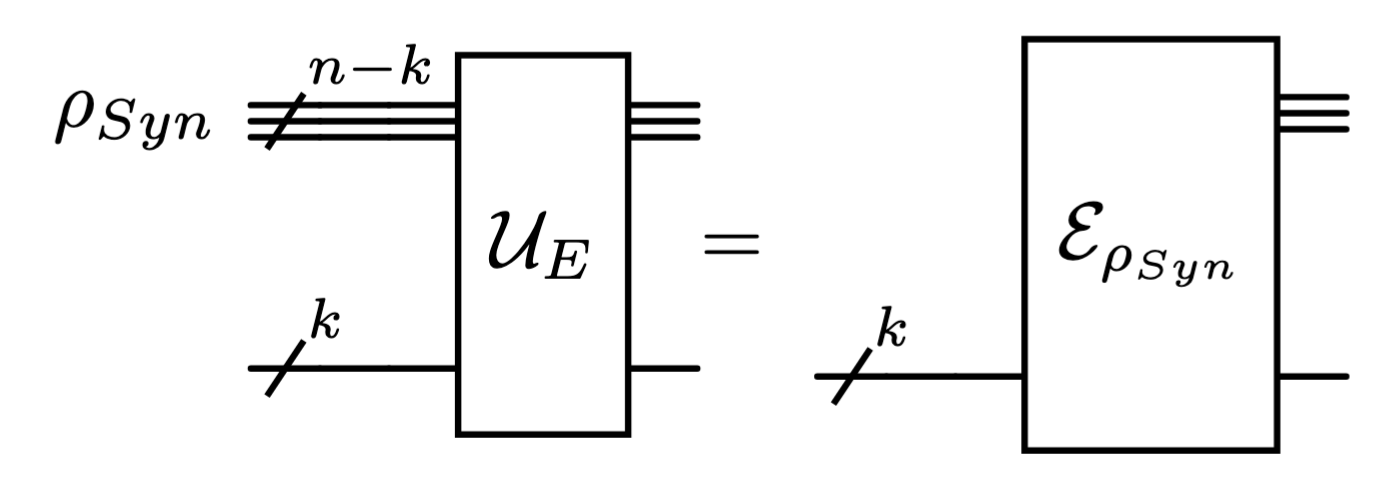}
    \caption{The generalized encoding operation in the logical frame. The generalized encoding operation, $\mathcal{E}_{\rho_{Syn}}$, is the same as initializing the syndrome qubits in the state, $\rho_{Syn}$, in the logical frame.}
    \label{fig:enter-label}
\end{figure}

With the formalization of constructing effective logical process under the assumption of perfect state preparation and measurement characterized, we can move on to defining what non-Markovianity means for logical qubits under these conditions.

\section{Logical Markovianity}\label{appendix:LM}
In this section, we give a formal definition of logical Markovianity. The two notions of Markovianity we talk about is button theoretic Markovianity and gate composability. In these definitions, the quantum computer is abstracted as a black box. Operations such as state preparations, gates, and measurements are abstracted as labelled buttons. \emph{Button-theoretic Markovianity} is the assumption that every time a particular button is pressed, the same operation occurs.

\begin{definition}[Button-theoretic Markovianity]\label{def:Buttons}
    Consider a set of gate set labels, $\mathcal{A}= \{\mathcal{A}_{states}, \mathcal{A}_{gates}, \mathcal{A}_{effects}\}$, where $\mathcal{A}_{states}$, $\mathcal{A}_{gates}$ and $\mathcal{A}_{effects}$ label state preparations, gates, and effects respectively. Consider the set of circuit instructions, $\mathcal{A}^{\star}$ constructed from the labels in $\mathcal{A}$, where
    \begin{multline}
                \mathcal{A}^{\star} =\{(\alpha_{0}, \beta_1, \dots, \beta_{i}, e_f ) : \alpha_{0} \in\mathcal{A}_{states},\ i\geq0\\ \beta_{i'}\in\mathcal{A}_{gates} \  0\leq i' \leq i, \ e_f \in\mathcal{A}_{effects}\}.
    \end{multline}

    A black box quantum computer is a function that maps circuit instructions to probabilities: $\mathcal{Q}:\mathcal{A}^{\star} \rightarrow [0,1]$. Consider a gate set $\mathcal{G}$ defined over a Hilbert space $\mathcal{H}$.
    \begin{equation}
    \mathcal{G} = \bigl\{ \{\rho_i \}_{i=1}^{N_\rho}; \{G_i\}_{i=1}^{N_G} ; \{ E_{i}^{(m)}\}_{m=1,i=1}^{N_{M}, N_{E}^{(m)}} \bigr\},
    \end{equation}
    where $\rho_i$ is a desity matrix on some Hilbert space, $\mathcal{H}$, $G_i :\mathcal{B}(\mathcal{H})\rightarrow \mathcal{B}(\mathcal{H})$, and  $E_{i}^{(m)} \in \mathcal{B}(\mathcal{H})$. The symbols $N_\rho$, $N_G$, $N_M$, and $N_{E}^{(m)}$ denote the number of state preparations, gates, measurements, and measurement outcomes for measurement $m$, respectively. The function, $\mathcal{Q}$, is \textbf{button-theoretic Markovian} for $\mathcal{G}$ if and only if there exist bijective functions, 
    \begin{align}
            g_{\mathfrak{s}}: &\mathcal{A}_{states} \rightarrow \{\rho_i \}_{i=1}^{N_\rho},\\
            g_{\mathfrak{G}}: &\mathcal{A}_{gates} \rightarrow \{G_i\}_{i=1}^{N_G},\\
            g_{\mathfrak{e}}:& \mathcal{A}_{effects} \rightarrow \{ E_{i}^{(m)}\}_{m=1,i=1}^{N_{M}, N_{E}^{(m)}},\\
            g &= g_{\mathfrak{s}}\cup g_{\mathfrak{G}}\cup g_{\mathfrak{e}}
    \end{align}
    such that for any sequence of circuit instructions, $\alpha^{\star}$, in $\mathcal{A}^{\star}$ where $\alpha^{\star} = (\alpha_{0}, \beta_1, \dots, \beta_{i}, e_f ) $, 
    \begin{equation}
        \mathcal{Q}(\alpha^{\star}) = Tr(g(e_{f})g(\beta_{i})\circ \cdots \circ g(\beta_{1})[g(\alpha_{0})]).
    \end{equation}
\end{definition}
When Markovian physical gates correspond to Markovian effective logical gates, we call them logical gate composable.
\begin{definition}[Gate Composability]\label{def:gate composability}
Consider a gadget retraction, $\Omega:\mathcal{B}(\mathcal{B}(\mathcal{H}^n))\rightarrow \mathcal{B}(\mathcal{B}(\mathcal{H}^k))$, and a set of gates $\mathcal{F}$. $\mathcal{F}$ is \textbf{logical gate composable} for $\Omega$ if and only if for all $G_{a}, G_{b} \in \mathcal{F}$
\begin{equation}
    \Omega[G_{a} \circ G_{b}] = \Omega[G_{a}]\circ \Omega[G_{b}].
\end{equation}
\end{definition}

\begin{proposition}
    Consider a gate set $\mathcal{G}$ defined over an $n$ qubit Hilbert space $\mathcal{H}^n$.
    \begin{equation}
    \mathcal{G} = \bigl\{ \{\mathcal{E}(\rho_{L,i})\}_{i=1}^{N_\rho}; \{G_i\}_{i=1}^{N_G} ; \{ \mathcal{D}^{\dagger} (E_{L,i}^{(m)})\}_{m=1,i=1}^{N_{M}, N_{E}^{(m)}} \bigr\},
    \end{equation}
    where $\rho_{L,i}\in \mathcal{B}(\mathcal{H}^k)$, $G_i :\mathcal{B}(\mathcal{H}^n)\rightarrow \mathcal{B}(\mathcal{H}^n)$, and  $E_{L,i}^{(m)}\in \mathcal{B}(\mathcal{H}^k)$. The maps $\mathcal{E}$ and $\mathcal{D}$, are the the encoding and unencoding operation for some $[[n,k]]$ code and set of corrections, $\mathfrak{Corrections}$. The symbols $N_\rho$, $N_G$, $N_M$, and $N_{E}^{(m)}$ denote the number of state preparations, gates, measurements, and measurement outcomes for measurement $m$, respectively. Let $\Omega[\cdot]\equiv \mathcal{D}\circ(\cdot)\circ\mathcal{E}$ be the gadget retraction for the code. Consider the gate set
    \begin{equation}
        \mathcal{G}_L = \bigl\{ \{\rho_{L,i}\}_{i=1}^{N_\rho}; \{\Omega[G_i]\}_{i=1}^{N_G} ; \{ E_{L,i}^{(m)}\}_{m=1,i=1}^{N_{M}, N_{E}^{(m)}} \bigr\},
    \end{equation}
    If a black box quantum computer is button theoretic for a gateset $\mathcal{G}$ then it is also button theoretic for $\mathcal{G}_L$ if and only if $\{G_i\}_{i=1}^{N_G} $ is gate composable for $\Omega$.
\end{proposition}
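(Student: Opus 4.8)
The plan is to unfold button-theoretic Markovianity (Definition~\ref{def:Buttons}) on both gate sets for an arbitrary circuit instruction, and then collapse the encoding/unencoding maps using the two algebraic facts $\mathcal{D}\circ\mathcal{E}=\mathcal{I}_k$ (which holds since $E^{\dagger}E=I_k$ and the recovery map acts trivially on code states) and $Tr[\mathcal{D}^{\dagger}(E)\,\sigma]=Tr[E\,\mathcal{D}(\sigma)]$. Fix a circuit instruction $\alpha^{\star}=(\alpha_{0},\beta_{1},\dots,\beta_{i},e_{f})\in\mathcal{A}^{\star}$ and write the images of its labels under the assumed $\mathcal{G}$-bijection $g$ as $g(\alpha_{0})=\mathcal{E}(\rho_{L,\alpha_{0}})$, $g(\beta_{j})=G_{\beta_{j}}$, and $g(e_{f})=\mathcal{D}^{\dagger}(E_{L,e_{f}})$. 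Button-theoretic Markovianity of $\mathcal{G}$ then yields
\begin{equation}
\begin{split}
\mathcal{Q}(\alpha^{\star})&=Tr\left[\mathcal{D}^{\dagger}(E_{L,e_{f}})\,(G_{\beta_{i}}\circ\cdots\circ G_{\beta_{1}})(\mathcal{E}(\rho_{L,\alpha_{0}}))\right]\\
&=Tr\left[E_{L,e_{f}}\,\Omega[G_{\beta_{i}}\circ\cdots\circ G_{\beta_{1}}](\rho_{L,\alpha_{0}})\right],
\end{split}
\end{equation}
the second line using the adjoint identity together with $\Omega[\cdot]=\mathcal{D}\circ(\cdot)\circ\mathcal{E}$. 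For $\mathcal{G}_L$, take the canonical bijection $h$ obtained by post-composing $g$ with the identifications $\mathcal{E}(\rho_{L,j})\leftrightarrow\rho_{L,j}$, $G_{j}\leftrightarrow\Omega[G_{j}]$, $\mathcal{D}^{\dagger}(E_{L,j}^{(m)})\leftrightarrow E_{L,j}^{(m)}$; then button-theoretic Markovianity of $\mathcal{G}_L$ via $h$ asserts $\mathcal{Q}(\alpha^{\star})=Tr[E_{L,e_{f}}\,(\Omega[G_{\beta_{i}}]\circ\cdots\circ\Omega[G_{\beta_{1}}])(\rho_{L,\alpha_{0}})]$ for every $\alpha^{\star}$.

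Comparing the two expressions, the proposition reduces to: $\Omega[G_{\beta_{i}}\circ\cdots\circ G_{\beta_{1}}]$ and $\Omega[G_{\beta_{i}}]\circ\cdots\circ\Omega[G_{\beta_{1}}]$ produce identical Born probabilities against the preparations and effects of $\mathcal{G}_L$, for every gate word over $\{G_{i}\}$, if and only if $\{G_{i}\}$ is gate composable. The \emph{only if} direction is immediate --- specializing to circuits with exactly two gates and letting the preparation and effect labels range over the gate set recovers precisely the pairwise identity $\Omega[G_{a}\circ G_{b}]=\Omega[G_{a}]\circ\Omega[G_{b}]$ of Definition~\ref{def:gate composability}. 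For the \emph{if} direction I would promote pairwise gate composability to the operator identity $\Omega[G_{\beta_{i}}\circ\cdots\circ G_{\beta_{1}}]=\Omega[G_{\beta_{i}}]\circ\cdots\circ\Omega[G_{\beta_{1}}]$ for words of arbitrary length by induction, using that $\mathcal{P}:=\mathcal{E}\circ\mathcal{D}$ is idempotent (immediate from $\mathcal{D}\circ\mathcal{E}=\mathcal{I}_k$) and that pairwise gate composability is exactly the assertion that a copy of $\mathcal{P}$ may be inserted between two composed gates without changing $\Omega$ of the pair; inserting these projectors one at a time should split the word completely. Once the operator identity holds, the two probability expressions above agree for every circuit and $h$ witnesses button-theoretic Markovianity of $\mathcal{G}_L$; conversely, any other bijection witnessing it must agree with $h$ up to a symmetry that leaves all circuit probabilities invariant (single-operation circuits $(\alpha_{0},e_{f})$ already pin down $Tr[h(e_{f})h(\alpha_{0})]=Tr[E_{L,e_{f}}\rho_{L,\alpha_{0}}]$), so nothing is lost by working with $h$.

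The main obstacle I anticipate is this promotion from pairs to arbitrary-length words. A bare induction on word length stalls: an intermediate product $G_{\beta_{j}}\circ\cdots\circ G_{\beta_{1}}$ is in general not an element of $\{G_{i}\}$, so a pairwise identity cannot be applied to it verbatim, and a pairwise identity for $(G_{a},G_{b})$ only controls $\Omega$ when $G_{b}$ receives a code-subspace input, whereas inside a long word the relevant state has typically drifted out of $\mathcal{B}(\mathcal{C})$. I would handle this by tracking which prefixes of the word return their output to $\operatorname{Image}(\mathcal{E})=\mathcal{B}(\mathcal{C})$ and appealing to the Theorem~\ref{thm: sufcond}-type fact that a process entering the next gadget from inside the code subspace is retracted faithfully --- in the cleanest case, where each $G_{i}$ is a QEC gadget that leaves the data qubits in the code subspace, one has $\mathcal{P}\circ G_{i}=G_{i}$ and the lifting is automatic, and the general argument should reduce to this kind of bookkeeping.
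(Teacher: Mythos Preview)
Your approach—unfold both definitions and compare the resulting probability formulas—is exactly what the paper does; its entire proof is the single sentence ``This can be demonstrated by simple substitution of definitions 1 and 2.'' You have gone well beyond that one-liner, and in doing so you have correctly put your finger on a gap that the paper simply does not engage with: Definition~\ref{def:gate composability} states gate composability only for \emph{pairs}, and the pairwise identity $\Omega[G_a\circ G_b]=\Omega[G_a]\circ\Omega[G_b]$ does not automatically lift to words of arbitrary length.

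Your instinct that the induction stalls is right, and the obstruction is concrete. Writing $\mathcal{P}=\mathcal{E}\circ\mathcal{D}$, $Q=\mathcal{I}-\mathcal{P}$, and blocking each $G$ with respect to $\mathcal{P}\oplus Q$ as $G=\left(\begin{smallmatrix}A&B\\C&D\end{smallmatrix}\right)$, pairwise composability for all $a,b$ is exactly $B_aC_b=0$, whereas the length-three identity additionally requires $B_aD_bC_c=0$; since $BC=0$ does not force $BDC=0$ for generic linear maps, there is no purely algebraic reason the pairwise hypothesis should suffice. Your proposed workaround—tracking prefixes that land back in $\operatorname{Image}(\mathcal{E})$ and invoking the Theorem~\ref{thm: sufcond} mechanism—only covers the case where each $G_i$ already returns the state to the code subspace, which is precisely the regime in which logical non-Markovianity cannot arise. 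So the ``if'' direction, read literally against the pairwise Definition~\ref{def:gate composability}, is not established by the substitution route either you or the paper take; the cleanest fix is to read ``gate composable'' as multiplicativity of $\Omega$ on the full monoid generated by $\{G_i\}$, after which your argument goes through verbatim. (Your ``only if'' direction also quietly assumes the preparations and effects in $\mathcal{G}_L$ are tomographically complete and that the witnessing bijection is the canonical $h$; the paper does not address this either.)
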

This can be demonstrated by simple substitution of definitions 1 and 2.
\section{Sufficient conditions for logical Markovianity}\label{sec: sufcond}
\begin{theorem}\label{thm: sufcond}
Consider a set of gates, $\mathcal{F}$ (where $\forall G \in \mathcal{F} $, $G$ is a linear map $G:\mathcal{B}(\mathcal{H}^n)\rightarrow \mathcal{B}(\mathcal{H}^n)$), a generalized encoding operation, $\mathcal{E}_{\rho}:\mathcal{B}(\mathcal{H}^k)\rightarrow \mathcal{B}(\mathcal{H}^n)$ (see Equation \ref{eq:gen enc op}), a decoding operation $\mathcal{D}:\mathcal{B}(\mathcal{H}^k)\rightarrow \mathcal{B}(\mathcal{H}^n)$ (such that $\mathcal{D}\circ \mathcal{E}_{\rho}= \mathcal{I}_k $), and the corresponding gadget retraction, $\Omega_{\rho}(\cdot):=\mathcal{D}\circ(\cdot)\circ \mathcal{E}_{\rho}$. If $\forall G \in \mathcal{F}$ $Image(G) \subseteq Image(\mathcal{E}_{\rho})$ then $\mathcal{F}$ is logical gate composable for $\Omega_{\rho}$.
\end{theorem}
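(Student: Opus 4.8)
The plan is to reduce logical gate composability for $\Omega_\rho$ to a single algebraic fact: the superoperator $\mathcal{E}_\rho\circ\mathcal{D}$ acts as the identity on $Image(\mathcal{E}_\rho)\subseteq\mathcal{B}(\mathcal{H}^n)$. This is the precise, generalized version of the main text's intuition that ``a process which starts and ends in the codespace is equivalent to a process on the logical qubits alone'' — with $Image(\mathcal{E}_\rho)$ playing the role of the codespace. I would prove it directly: composing the hypothesis $\mathcal{D}\circ\mathcal{E}_\rho = \mathcal{I}_k$ on the left with $\mathcal{E}_\rho$ gives $\mathcal{E}_\rho\circ\mathcal{D}\circ\mathcal{E}_\rho = \mathcal{E}_\rho$, so for any $X\in Image(\mathcal{E}_\rho)$, writing $X=\mathcal{E}_\rho(Y)$ with $Y\in\mathcal{B}(\mathcal{H}^k)$, we get $(\mathcal{E}_\rho\circ\mathcal{D})(X)=\mathcal{E}_\rho(\mathcal{D}(\mathcal{E}_\rho(Y)))=\mathcal{E}_\rho(Y)=X$. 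Equivalently, $\mathcal{E}_\rho\circ\mathcal{D}$ is idempotent, $(\mathcal{E}_\rho\circ\mathcal{D})^2 = \mathcal{E}_\rho\circ(\mathcal{D}\circ\mathcal{E}_\rho)\circ\mathcal{D} = \mathcal{E}_\rho\circ\mathcal{D}$, i.e. a projector with range $Image(\mathcal{E}_\rho)$.

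Next I would feed in the image hypothesis. For any $G\in\mathcal{F}$ we are given $Image(G)\subseteq Image(\mathcal{E}_\rho)$, hence $Image(G\circ\mathcal{E}_\rho)\subseteq Image(G)\subseteq Image(\mathcal{E}_\rho)$, and applying the fact above to every output of $G\circ\mathcal{E}_\rho$ yields the superoperator identity $\mathcal{E}_\rho\circ\mathcal{D}\circ G\circ\mathcal{E}_\rho = G\circ\mathcal{E}_\rho$ for all $G\in\mathcal{F}$. With this, the theorem is a one-line collapse: for arbitrary $G_a,G_b\in\mathcal{F}$,
\begin{align}
\Omega_\rho[G_a]\circ\Omega_\rho[G_b]
&= (\mathcal{D}\circ G_a\circ\mathcal{E}_\rho)\circ(\mathcal{D}\circ G_b\circ\mathcal{E}_\rho)\\
&= \mathcal{D}\circ G_a\circ\bigl(\mathcal{E}_\rho\circ\mathcal{D}\circ G_b\circ\mathcal{E}_\rho\bigr)\\
&= \mathcal{D}\circ G_a\circ G_b\circ\mathcal{E}_\rho = \Omega_\rho[G_a\circ G_b],
\end{align}
where the third equality uses the identity with $G=G_b$. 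By Definition \ref{def:gate composability} this says exactly that $\mathcal{F}$ is logical gate composable for $\Omega_\rho$.

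There is no deep obstacle; the argument is purely linear-algebraic and, notably, never uses complete positivity or trace preservation of the elements of $\mathcal{F}$ — it works for arbitrary linear maps, exactly as the statement claims. The one point that genuinely needs care is resisting the temptation to conclude that $\mathcal{E}_\rho\circ\mathcal{D}$ is the identity \emph{everywhere}: it is not, since $\mathcal{D}$ is generally far from injective, and the cancellation only goes through because $G_b\circ\mathcal{E}_\rho$ lands inside $Image(\mathcal{E}_\rho)$. The second thing to keep straight is the composition asymmetry: the step above only requires $Image(G_b)\subseteq Image(\mathcal{E}_\rho)$ for the gate in the ``earlier'' (rightmost) slot, which is why the hypothesis is imposed on every member of $\mathcal{F}$ — in forming all ordered pairs, every gate must be allowed to appear as $G_b$. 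Finally, I would make explicit that $Image$ here means the image of a superoperator inside $\mathcal{B}(\mathcal{H}^n)$ and that $\mathcal{E}_\rho$ is the generalized encoding operation of Eq.~\ref{eq:gen enc op}, so that the representation $X=\mathcal{E}_\rho(Y)$ used above is legitimate.
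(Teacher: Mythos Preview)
Your proof is correct and follows essentially the same approach as the paper's: both hinge on the single fact that $\mathcal{E}_\rho\circ\mathcal{D}$ acts as the identity on $Image(\mathcal{E}_\rho)$, then insert this map between $G_a$ and $G_b$ to split $\Omega_\rho[G_a\circ G_b]$ into $\Omega_\rho[G_a]\circ\Omega_\rho[G_b]$. The paper merely asserts the key fact with a ``Note that\ldots'', whereas you derive it explicitly from $\mathcal{D}\circ\mathcal{E}_\rho=\mathcal{I}_k$, so your write-up is slightly more complete but not a different route.
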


\begin{proof}
    Consider a set of gates $\mathcal{F}$ such that $\forall G \in \mathcal{F}$ $Image(G) \subseteq Image(\mathcal{E}_{\rho})$. Note that $\mathcal{E}_{\rho} \circ \mathcal{D}$ acts as the identity on operators in $Image(\mathcal{E}_{\rho})$, so considering two arbitrary gates in $\mathcal{F}$, $G_a$, and $G_b$
    \begin{align}
        \Omega_\rho [G_a \circ G_b] & := \mathcal{D}\circ G_a \circ G_b \circ \mathcal{E}\\
                                    & = \mathcal{D}\circ G_a \circ \mathcal{E}_{\rho} \circ \mathcal{D} \circ G_b \circ \mathcal{E}\\
                                    &= \Omega_\rho [G_{a}]\circ \Omega_\rho [G_{b}].
    \end{align}
    The second line follows from the output of each gate being in $Image(\mathcal{E}_{\rho})$. Therefore $\mathcal{F}$ is logical gate composable for $\Omega_\rho$.
\end{proof}
From this, an intuitive corollary can be proven:
\begin{corollary}
    Consider a set of gates, $\mathcal{F}$, an encoding operation $\mathcal{E}$, a decoding operation $\mathcal{D}$, and the corresponding gadget retraction $\Omega$. If every gate in the set is followed by a perfect round of error correction, in other words $\forall G \in \mathcal{F}$ $ G = \mathcal{R}\circ \widetilde{G}$ for some arbitrary CPTP map $\widetilde{G}$, then $\mathcal{F}$ is logical gate composable for $\Omega$.
\end{corollary}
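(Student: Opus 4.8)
The plan is to derive this directly from Theorem~\ref{thm: sufcond}, whose hypothesis is that every gate has image contained in the image of the (generalized) encoding operation. So the whole task reduces to checking that $G = \mathcal{R}\circ\widetilde G$ forces $Image(G)\subseteq Image(\mathcal{E})$. First I would identify the plain encoding operation $\mathcal{E}$ of the corollary with the generalized encoding operation $\mathcal{E}_\rho$ of Theorem~\ref{thm: sufcond} for the choice $\rho = \ket{\vec 0}\bra{\vec 0}$; with this identification, the condition $\mathcal{D}\circ\mathcal{E} = \mathcal{I}_k$ demanded by that theorem is just the defining property of the decoding operation $\mathcal{D} = \mathcal{E}^\dagger\circ\mathcal{R}$, since $\mathcal{R}$ acts as the identity on operators supported on the codespace and $\mathcal{E}^\dagger\circ\mathcal{E} = \mathcal{I}_k$ because $E$ is an isometry.

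Next I would establish the key containment. From the explicit form of the recovery map (Eq.~\ref{eq: idealrecov}, equivalently Eq.~\ref{eq: recov map}), each output $\mathcal{R}(\rho) = \sum_{\s} R(\s)\Pi_{\s}\rho\Pi_{\s} R^\dagger(\s)$ is a sum of terms, each of which is supported on $R(\s)$ applied to the syndrome-$\s$ subspace; since $R(\s)$ is a correction with syndrome $\s$, it maps that subspace onto the codespace $\mathcal{C}$, so every term, and hence the sum, lies in $\mathcal{B}(\mathcal{C})$. (Equivalently, one can read this off directly from Table~\ref{tab:notation}, which records the codomain of $\mathcal{R}$ as $\mathcal{B}(\mathcal{C})$.) On the other side, $Image(\mathcal{E}) = \{E X E^\dagger : X\in\mathcal{B}(\mathcal{H}^k)\} = \mathcal{B}(\mathcal{C})$, because $E$ is an isometry with $Image(E) = \mathcal{C}$. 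Combining these, $Image(\mathcal{R})\subseteq\mathcal{B}(\mathcal{C}) = Image(\mathcal{E})$, and since precomposing a superoperator with the linear map $\widetilde G$ cannot enlarge its range, $Image(G) = Image(\mathcal{R}\circ\widetilde G)\subseteq Image(\mathcal{R})\subseteq Image(\mathcal{E})$ for every $G\in\mathcal{F}$.

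With the hypothesis of Theorem~\ref{thm: sufcond} now verified, I would simply invoke that theorem to conclude that $\mathcal{F}$ is logical gate composable for $\Omega$. I do not expect a genuine obstacle here; the only points requiring care are bookkeeping ones: reading ``image'' as the image of the superoperator (the set of output operators in $\mathcal{B}(\mathcal{H}^n)$, not the image of $\mathcal{R}$ acting on states), matching the plain $\mathcal{E}$, $\mathcal{D}$, $\Omega$ of the corollary to the generalized $\mathcal{E}_\rho$, $\mathcal{D}$, $\Omega_\rho$ of the theorem with $\rho$ the all-zero syndrome state, and noting that complete positivity and trace preservation of $\widetilde G$ play no role beyond linearity. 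Everything else is a one-line application of the already-proved theorem.
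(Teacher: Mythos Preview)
Your proposal is correct and follows essentially the same approach as the paper: both arguments reduce the corollary to Theorem~\ref{thm: sufcond} by observing that a gate of the form $\mathcal{R}\circ\widetilde G$ has image contained in $Image(\mathcal{E})$. The paper's proof is terser, simply asserting $Image(G)=Image(\mathcal{R})=Image(\mathcal{E})$ and invoking the theorem, whereas you supply the supporting details (identifying $\mathcal{E}$ with $\mathcal{E}_{\ket{\vec 0}\bra{\vec 0}}$, verifying $\mathcal{D}\circ\mathcal{E}=\mathcal{I}_k$, and justifying the image containments explicitly); your use of containments rather than equalities is in fact the more careful formulation, but the logical skeleton is identical.
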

\begin{proof}
    If every gate contains a round of perfect error correction then it's output will be in the codespace. Mathematically, $\forall G \in \mathcal{F}$, $ Image(G) = Image(\mathcal{R}) = Image(\mathcal{E})$. By \ref{thm: sufcond}, $\mathcal{F}$ is logical gate composable for $\Omega$.
\end{proof}
The intuition behind this is that perfect QEC always leaves the data qubits in the codespace, and if the process both starts and ends in the codespace (which is isomorphic to the logical qubit space), then the physical process is isomorphic to a process on the logical qubits. In many fault-tolerant schemes, gate gadgets are alternated with QEC gadgets so, to first order, the process is Markovian in this sense. Only when the error correction gadgets have faults is there a possibility of logical non-Markovianity.

\bibliographystyle{quantum}
\bibliography{NMLQB}
\end{document}